\documentclass{article}

\author{Andrzej Hanyga\\
ul. Bitwy Warszawskiej 1920r 14/52\\
02-366 Warszawa, PL}

\title{Plane waves in anisotropic viscoelastic media}

\bibliographystyle{plain}

\usepackage{amsmath,amssymb,upgreek,enumerate}

\usepackage{amsthm}
\newcommand{\tens}[1]{\mathsf{#1}}
\newcommand{\e}{\mathrm{e}}
\newcommand{\ee}{\mathbf{e}}
\newcommand{\f}{\mathbf{f}}
\newcommand{\re}{\mathrm{Re}}
\newcommand{\im}{\mathrm{Im}}
\newcommand{\Y}{\mathbf{Y}}
\newcommand{\aA}{\mathbf{a}}
\newcommand{\dd}{\mathrm{d}}
\newcommand{\ii}{\mathrm{i}}
\newcommand{\x}{\mathbf{x}}
\newcommand{\n}{\mathbf{n}}
\renewcommand{\u}{\mathbf{u}}
\newcommand{\vv}{\mathbf{v}}
\newcommand{\w}{\mathbf{w}}
\newcommand{\kk}{\mathbf{k}}
\newcommand{\A}{\mathbf{A}}
\newcommand{\B}{\mathbf{B}}
\newcommand{\C}{\mathbf{C}}

\newcommand{\I}{\mathbf{I}}
\newcommand{\Q}{\mathbf{Q}}
\newcommand{\K}{\mathbf{K}}
\newcommand{\M}{\mathbf{M}}
\newcommand{\N}{\mathbf{N}}

\newcommand{\G}{\mathbf{G}}

\newcommand{\Div}{\mathrm{div}\,}

\newcommand{\OO}{\mathrm{O}}

\newtheorem{theorem}{Theorem}[section]
\newtheorem{lemma}[theorem]{Lemma}
\newtheorem{corollary}[theorem]{Corollary}

\newtheorem{definition}[theorem]{Definition}

\begin{document}

\maketitle

\begin{abstract}
Two concepts of plane waves in anisotropic viscoelastic media are studied. One of these 
concepts allows for the use of methods based on the theory of complete Bernstein functions. 
This allows for a deeper study of frequency-domain asymptotics of the attenuation function and time-domain regularity at the wavefronts. A relation between the direction of the energy flux density 
and the attenuation vector is examined under much more general assumptions.
\end{abstract}

\noindent{\small \textbf{Keywords}: viscoelasticity, anisotropy, completely monotonic, 
causal positive definite, matrix-valued complete \vspace{0.5cm}
Bernstein function, attenuation, dispersion}

\begin{small}
\noindent\textbf{Notation.}\\

\begin{tabular}{lll}
$\mathbb{R}$ & & the set of real numbers \\
$\mathbb{C}$ & & the complex plane \\
$\mathbb{N}$ && the set of positive integers \\
$\mathbb{R}^d, \mathbb{C}^d$ & & real, complex $d$-dimensional space\\
$]a,b]$ & $\{ x \in \mathbb{R} \mid a < x \leq b\}$ & \\
$\mathbb{R}_+$ & $]0,\infty[$ & \\
$\mathcal{M}_d$ & & the space of $d\times d$ real matrices, $d \in \mathbb{N}$\\
$\mathcal{M}^{\mathbb{C}}_d $ & & the space of $d\times d$ complex matrices, $d \in \mathbb{N}$\\
$\mathcal{M}^{\mathbb{C}}$ & $\bigcup_{d\in \mathbb{N}} \mathcal{M}^{\mathbb{C}}_d$\\
$\overline{z}$ & & complex conjugate of $z \in \mathbb{C}$\\
$\Im z$, $\Re z$ & & imaginary, real part of $z$\\
$f\ast_t g$ & $\int_0^\infty f(s) \, g(t-s)\, \dd s$ & Volterra convolution\\
$\tilde{f}(p)$ & $\int_0^\infty \e^{-p t} \, f(t) \, \dd t$ & Laplace transform of $f$\\
$\vv^\top, \A^\top$ & & transpose matrices\\
$\vv^\dag, \A^\dag$ & & Hermitian conjugate matrices\\
$\vv \cdot \w$ & $\vv^\top\, \w$ & scalar product of $\vv, \w \in \mathbb{R}^d$ \\
$\I, \; \I_d$ & & unit matrix, $d\times d$ unit matrix\\
\end{tabular}
\end{small}

\section{Introduction.}

Real viscoelastic media have a universal property which is usually overlooked or not fully exploited.
The relaxation modulus of a real viscoelastic medium is always a completely monotone (CM) function. 
We recall here that an infinitely differentiable function on $\mathbb{R}_+$ is CM if
its derivatives satisfy the infinite sequence of inequalities 
\begin{equation} \label{eq:CM}
(-1)^n \, f^{(n)}(t) \geq 0 \quad \text{ for $t > 0$, $n \in \mathbb{N}\cup \{0\}$}
\end{equation}
In accordance with its physical meaning it is usually bounded but some theoretical models involve unbounded
but locally integrable (LICM) relaxation moduli \cite{Rouse53}. The LICM property of the relaxation modulus and its consequences for creep compliance was studied in
detail in \cite{HanDuality}.
 
A detailed investigation of wave dispersion and attenuation in viscoelastic materials with LICM relaxation modulus in one-dimensional problems can be found in 
\cite{SerHan2010,HanWM2013,HanUno,HanDue,HanJCA}. It was shown in \cite{SerHan2010,HanWM2013} that the wavenumber vector multiplied by 
the imaginary unit is a complete Bernstein function \cite{BernsteinFunctions} of the variable $p := -\ii \omega$, where $\omega$ denotes the circular frequency. Integral 
representations of the attenuation function (the logarithmic attenuation as a function of frequency)
and the inverse phase speed were thus derived. The integral representation allowed asymptotic analysis of the attenuation and regularity of the wave field at the wavefront.

One-dimensional problems do not exhaust all the practical applications of viscoelasticity. 
Anisotropy is an important property of rocks, bones and bio-tissues. It is often studied 
simultaneously with viscous properties of materials. It however turns out that the three-dimensional solutions of the anisotropic viscoelastic equations of motion cannot be represented by the same 
classes of functions as in one-dimensional problems. It was shown in \cite{HanAnisoVE} that 
Green's functions in such media have to be expressed in terms of matrix-valued complete Bernstein 
functions (mvCBFs). The last concept was developed in \cite{HanAnisoVE}. 

It is common to
begin investigation of wave propagation with a detailed study of plane waves, their dispersion, attenuation and polarization. This paper is an attempt to review the plane wave concept in
the context of anisotropy and viscoelasticity using the methods of \cite{HanAnisoVE}. 
This paper is dedicated to a few selected problems and its objective is to demonstrate 
the appropriate mathematical tools.

Combination of anisotropy and viscosity leads to some ambiguities in 
the definition of plane waves. One would expect that plane waves represent independent 
propagation modes. This approach however excludes an application of the full power of the theory
based on LICM relaxation and the mathematical apparatus of complete Bernstein functions. In order to retain the possibility of using the complete Bernstein functions it is 
necessary to work with coupled plane wave modes (Section~\ref{sec:alter}). 

An additional topic investigated here is the relation between the attenuation vector (the
imaginary part of the complex wave number vector) and the energy flux density vector. For
this topic we have chosen a larger class of relaxation moduli, namely 
(matrix-valued) causal positive definite (CPD) functions. While the assumption that a relaxation modulus 
completely monotone is a guess based on experimental data, the CPD property can be
derived from the fluctuation-dissipation theorem using the Wiener-Khintchine theorem. More importantly, 
the theory CPD function is more suitable for the problem at hand.

The paper contains a brief presentation of 
matrix-valued complete Bernstein functions (Section~\ref{sec:mvCBF}). In the following 
two sections complete Bernstein functions are used to discuss two concepts of plane waves in 
anisotropic viscoelastic media. 

In Sec~\ref{sec:CPD} we define the causal positive definite functions and use their properties to prove that in a viscoelastic 
medium with a causal positive definite relaxation modulus the angle between the energy flux 
density and the attenuation vector is acute. This statement is a fortiori true for media with
LICM relaxation moduli.

\section{Matrix-valued complete Bernstein functions.}
\label{sec:mvCBF}

\begin{definition} \label{def:mvCBF}
A matrix-valued function $\A: \mathbb{R}_+ \cup \{ 0 \} \rightarrow \mathcal{M}_d$ 
is said to be a matrix-valued complete Bernstein function (mvCBF) if 
it has an analytic continuation to the cut complex plane $\mathbb{C}\setminus\, ]-\infty,0]$ 
which satisfies the inequality
\begin{equation} \label{eq:0a} 
\Im z \, \Im \A(z) \geq 0
\end{equation}
and $\lim_{x \rightarrow 0+} \A(x)$ exists and is real.
\end{definition}
By definition $\Im \A(z)$ is Hermitian, hence the inequality in \eqref{eq:0a} 
makes sense. 

\begin{theorem} \label{thm:CBF}
If the matrix-valued function $\A(x)$ on $\mathbb{R}_+$ is a mvCBF then there are two positive semi-definite 
matrices $\B$ and $\C$, a positive Radon measure $\mu$ on $\mathbb{R}_+$ satisfying the
inequality
\begin{equation} \label{eq:6}
\int_{]0,\infty[} (1 + r)^{-1} \,\mu(\dd r) < \infty,
\end{equation}
and a measurable $\mu$-almost everywhere bounded and positive semi-definite function $\mathbf{M}$
such that 
\begin{equation} \label{eq:5}
\A(x) = \B + x\, \C + x \int_{]0,\infty[} (x + s)^{-1} \mathbf{M}(s)\, \mu(\dd s), \qquad x \geq 0
\end{equation}
\end{theorem}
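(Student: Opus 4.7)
The plan is to reduce the matrix statement to the scalar complete Bernstein representation by scalarization, and then to reassemble the ingredients through polarization and a matrix-valued Radon--Nikodym argument. The backbone is the classical scalar Nevanlinna representation of a CBF applied to the quadratic form $\vv^\dag \A(z)\, \vv$.

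\emph{Step 1: scalarization.} Fix $\vv \in \mathbb{C}^d$ and set $f_\vv(z) := \vv^\dag \A(z)\, \vv$. Because $\A$ is analytic on $\mathbb{C} \setminus\, ]-\infty,0]$ and its Hermitian imaginary part satisfies \eqref{eq:0a}, one has $\Im f_\vv(z) = \vv^\dag \Im \A(z)\, \vv$, which carries the sign of $\Im z$. Together with the existence of a real boundary value of $f_\vv$ at $0+$, this makes $f_\vv$ a scalar complete Bernstein function, so the classical representation theorem furnishes $b_\vv, c_\vv \geq 0$ and a positive Radon measure $\sigma_\vv$ on $]0,\infty[$ with $\int (1+s)^{-1}\, \sigma_\vv(\dd s) < \infty$ and
\[
f_\vv(x) = b_\vv + c_\vv\, x + x\int_{]0,\infty[} (x+s)^{-1}\, \sigma_\vv(\dd s).
\]

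\emph{Step 2: recovery of $\B$ and $\C$.} The constants extract as $b_\vv = \lim_{x \to 0+} f_\vv(x) = \vv^\dag \A(0+)\, \vv$ and $c_\vv = \lim_{x \to \infty} f_\vv(x)/x$; both are quadratic forms in $\vv$, so polarization returns real symmetric matrices $\B := \A(0+)$ and $\C$ with $b_\vv = \vv^\dag \B\, \vv$ and $c_\vv = \vv^\dag \C\, \vv$. Non-negativity of the $b_\vv$'s and $c_\vv$'s renders both $\B$ and $\C$ positive semi-definite; reality follows from $\A$ being real-matrix-valued on $\mathbb{R}_+$ (and the explicit assumption in Definition~\ref{def:mvCBF} for $\B$).

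\emph{Step 3: matrix measure and Radon--Nikodym.} The assignment $E \mapsto \sigma_\vv(E)$ is quadratic in $\vv$ on Borel sets, so by polarization (applied on the sesquilinear pairing $\vv^\dag (\cdot)\, \w$) there is a unique $\mathcal{M}^{\mathbb{C}}_d$-valued Radon measure $\Sigma$ on $]0,\infty[$, with positive semi-definite values, satisfying $\vv^\dag \Sigma(E)\, \vv = \sigma_\vv(E)$ for every Borel $E$ and every $\vv$. Setting $\mu(\dd s) := \trace\, \Sigma(\dd s)$, a positive Radon measure dominating every entry of $\Sigma$, inequality \eqref{eq:6} follows from $\sum_i \int (1+r)^{-1}\, \sigma_{\ee_i}(\dd r) < \infty$. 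An entry-wise Radon--Nikodym then produces a measurable density $\M(s) := \dd \Sigma / \dd \mu$ which is positive semi-definite $\mu$-a.e., and the constraint $\trace\, \M(s) = 1$ $\mu$-a.e.\ makes $\M$ uniformly bounded. Inserting $\sigma_\vv(\dd s) = \vv^\dag \M(s)\, \vv\; \mu(\dd s)$ into the scalar representation and invoking the arbitrariness of $\vv$ delivers \eqref{eq:5}.

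\emph{Main obstacle.} The least routine step is the construction of the positive-semi-definite-valued Nevanlinna measure $\Sigma$ from the scalar family $\{\sigma_\vv\}_{\vv}$: one must show that polarization yields an object that is countably additive in the operator topology and takes positive semi-definite values, not merely Hermitian ones. This is most naturally handled by Stieltjes inversion applied simultaneously to every matrix entry, leveraging $\Im \A(x + \ii y) \geq 0$ for $y > 0$; once $\Sigma$ is in hand the remaining assembly is mechanical.
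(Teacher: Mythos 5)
Your proposal is correct and follows essentially the same route as the paper: scalarize to the quadratic forms $\vv^\dag \A(z)\,\vv$, invoke the scalar CBF representation, recover $\B$ and $\C$ by limits and polarization, build the matrix-valued measure via polarization plus Stieltjes inversion, take $\mu$ as its trace, and finish with Radon--Nikodym. The ``main obstacle'' you flag (countable additivity and positive semi-definiteness of the polarized measure) is resolved in the paper exactly as you suggest, by the Stieltjes inversion formula together with the domination inequality $\vert \vv^\top \N(I)\, \w\vert \leq \tfrac{1}{2}(\|\vv\|^2 + \|\w\|^2)\,\mu(I)$.
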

\begin{proof}
If $\B \in \mathcal{M}^\mathbb{C}_d$ and $\Im \B \geq 0$ then for every $\vv \in
\mathbb{C}^d$ 
$$\Im \vv^\dag\, \B \, \vv = \frac{1}{2 \ii} \left[ \vv^\dag \, \B \, \vv - 
\overline{\vv^\dag \, \B \, \vv}\right] = \frac{1}{2 \ii} \vv^\dag\, \left(\B - \B^\dag\right)\, 
\vv \geq 0$$

If $\A$ is a mvCBF then 
$\Im \left[\vv^\dag\, \A(z)\, \vv\right] \geq 0$ for every $\vv \in \mathbb{C}^d$ and $\Im z > 0$.
Furthermore $\lim_{x\rightarrow 0+} \A(x)$ exists and is real. 
For real $x$ the matrix $\A(x)$ is real, hence it is sufficient to consider $\vv \in \mathbb{R}^d$.
It follows that  $\vv^\top\, \A(x)\, \vv$ is a CBF for every $\vv \in \mathbb{R}$ and 
therefore there are two non-negative numbers $b_\vv, c_\vv$ and a positive measure $\mu_\vv$ satisfying 
inequality~\eqref{eq:6} such that 
$$ \vv^\top\, \A(x)\, \vv = b_\vv + x \, c_\vv + x \int_{]0,\infty[} (x + r)^{-1} \, \mu_\vv(\dd r)$$
\cite{BernsteinFunctions}. It is clear that $b_\vv = \vv^\top \, \B\, \vv$, where $\B := \A(0)$.
As for $c_\vv$,
$$c_\vv = \lim_{x\rightarrow\infty} \left[ x^{-1}\vv^\top\, \A(x)\, \vv\right]$$
This proves that the limit on the right exists, hence a symmetric matrix $\C$ can be defined by polarization
$$2 \vv^\top \, \C \, \w = c_{\vv + \w}- c_\vv - c_\w $$
It is easy to see that the right-hand side is a linear function of $\vv$ and $\w$ and therefore it defines a symmetric positive semi-definite matrix $\C$. 

Let $\A_0(x) := \A(x) - \B - x \, \C$,
$$\vv^\top\, \A_0(x)\, \vv = x \int_{]0,\infty[} (x + s)^{-1}\, \mu_\vv(\dd s)$$
By polarization 
$$\vv^\top\, \A_0(x)\, \w = \int_{]0,\infty[} (x + s)^{-1}\, N_{\vv,\w}(\dd r)$$
where $2 N_{\vv,\w}(I) := \mu_{\vv+\w}(I) - \mu_\vv(I) - \mu_\w(I)$ for every interval 
$I \subset \mathbb{R}_+$. $N_{\vv,\w}(I)$ can be expressed in terms of
$\vv^\top\,\A_0\, \w$:
$$N_{\vv,\w}(I) = \lim_{\varepsilon\rightarrow 0+} \frac{1}{\uppi} \int_I \Im 
\frac{\vv^\top\, \A_0(-s + \ii \varepsilon) \, \w}{s - \ii \varepsilon} \, \dd r$$
for every segment $I$ whose ends are continuity points of the measure $N_{\vv,\w}$
\cite{BernsteinFunctions}. 
The last expression shows existence of a $\mathcal{M}_d$-valued measure $\N$ such that 
$N_{\vv,\w}(I) = \vv^\top\, \N(I)\, \w$. $\N(I)$ is positive semi-definite because
for every $\vv \in \mathbb{R}_+$ $\vv^\top\, \N(I)\, \vv \geq 0$.

Now 
$$\vv^\top\, \N(I) \, \vv + \w^\top \, \N(I)\, \w \pm 2 \vv^\top\, \N(I)\, \w
= (\vv \pm \w)^\top\, \N(I)\, (\vv \pm \w) \geq 0$$
hence
\begin{equation} \label{eq:RN}
\vert \vv^\top\, \N(I) \, \w  \vert \leq \frac{1}{2} \left[ \vv^\top\, \N(I)\, \vv
+ \w^\top \, \N(I) \, \w\right] \leq \frac{1}{2} \left(\| \vv \|^2 +  \, \| \w\|^2\right) \, \mu(I)
\end{equation}
where $\mu(I) := \mathrm{trace}[\N(I)]$. The Radon measure $\mu$ thus defined on
$\mathbb{R}_+$ is positive. The measures $\mu_\vv$ satisfy the inequality~\eqref{eq:6},
hence $\N_\vv$ and $\mu$ satisfy the same inequality.

By the Radon-Nikodym theorem \cite{Rudin76} there is a $\mu$-almost everywhere bounded function
$\M(r)$ on $\mathbb{R}_+$ such that $\N(\dd r) = \M(r)\, \mu(\dd r)$. 
\end{proof}

\begin{corollary} \label{cor:calc}
If $\A(x)$ is a mvCBF satisfying~\eqref{eq:5}
$$\N(]a,b]) = \frac{1}{\uppi} \lim_{\varepsilon\rightarrow 0+}\int_{]a,b]} 
\Im \left[\frac{\A_0(-s + \ii \varepsilon)}{s - \ii \varepsilon}\right] \dd s$$
and $\mu(]a,b]) = \mathrm{trace}[\N(]a,b])]$
for every regular point $b > 0$ of the measure $\N$ and every $a > 0$,
where $\A_0(x) = \A(x) - \B - x \, \C$. 
\end{corollary}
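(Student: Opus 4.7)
The plan is to lift the scalar Stieltjes inversion formula that already appeared inside the proof of Theorem~\ref{thm:CBF} to a matrix-valued identity by evaluating on a basis. Recall that the proof established, for every $\vv, \w \in \mathbb{R}^d$, every $a > 0$ and every regular point $b > 0$ of $N_{\vv,\w}$, the scalar identity
$$N_{\vv,\w}(\,]a,b]\,) = \lim_{\varepsilon\to 0+}\frac{1}{\uppi}\int_{]a,b]} \Im\frac{\vv^\top\,\A_0(-s+\ii\varepsilon)\,\w}{s-\ii\varepsilon}\,\dd s,$$
together with the relation $N_{\vv,\w}(I) = \vv^\top\,\N(I)\,\w$. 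So the matrix identity in the corollary is essentially a repackaging of this.

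First, I would note that if $b$ is a regular point of the $\mathcal{M}_d$-valued measure $\N$, meaning $\N(\{b\}) = 0$, then $b$ is automatically a regular point of each of the finitely many scalar measures $N_{\ee_i,\ee_j}(I) = \ee_i^\top\,\N(I)\,\ee_j$, since $N_{\ee_i,\ee_j}(\{b\}) = \ee_i^\top\,\N(\{b\})\,\ee_j = 0$. Hence the scalar inversion formula applies simultaneously to all $d^2$ entries.

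Next, I would apply the scalar identity with $\vv = \ee_i$ and $\w = \ee_j$ to read off the $(i,j)$ entry:
$$\N_{ij}(\,]a,b]\,) = \lim_{\varepsilon\to 0+}\frac{1}{\uppi}\int_{]a,b]}\Im\left[\frac{\A_0(-s+\ii\varepsilon)}{s-\ii\varepsilon}\right]_{ij}\dd s.$$
Because $\mathcal{M}_d$ is finite-dimensional, entrywise convergence coincides with convergence in any matrix norm, and the operations of taking the $(i,j)$ entry, integrating, and taking the limit all commute. Assembling the $d^2$ scalar equalities yields the displayed matrix formula. The identity $\mu(\,]a,b]\,) = \trace[\N(\,]a,b]\,)]$ is then just the definition of $\mu$ from the proof of Theorem~\ref{thm:CBF}; alternatively, apply $\trace$ to the matrix formula and use the fact that $\trace$ is a continuous linear functional that commutes with the integral and the limit.

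The argument is therefore a straightforward bookkeeping exercise, and there is no genuine obstacle. The only point worth flagging is the trivial observation, afforded by finite-dimensionality of $\mathcal{M}_d$, that continuity of $\N$ at $b$ automatically implies continuity of all its scalar components, so the hypothesis ``$b$ regular for $\N$'' is strong enough to invoke the scalar Stieltjes inversion for every $\vv,\w$ used in the proof.
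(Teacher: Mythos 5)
Your proposal is correct and is essentially the argument the paper intends: the corollary is stated without a separate proof precisely because it is the entrywise (polarization/basis-vector) repackaging of the scalar inversion formula for $N_{\vv,\w}$ displayed in the proof of Theorem~\ref{thm:CBF}, together with the definition $\mu(I) := \trace[\N(I)]$. Your added observation that regularity of $b$ for $\N$ implies regularity for every scalar component $N_{\ee_i,\ee_j}$ is a small but worthwhile point that the paper leaves implicit.
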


Equation~\eqref{eq:5} provides an analytic continuation of the function $\A(x)$ to the 
complex plane cut along the negative real semi-axis. 

The function $\M(s)$ can be assumed bounded by 1 almost everywhere in the sense of measure $\mu$.

\begin{corollary} \label{cor:coeffs}
If $\A(x)$ satisfies equation~\eqref{eq:5} and \eqref{eq:6}, then
\begin{enumerate}[(i)]
\item $\A(0) = \B$;
\item $\C = \lim_{x\rightarrow \infty} x^{-1}\, \A(x) \quad \text{for $x \in \mathbb{R}_+$}$.
\end{enumerate}
\end{corollary}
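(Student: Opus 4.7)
Both claims follow directly by substituting the integral representation \eqref{eq:5} and exchanging limit with integration. The only nontrivial step in either case is verifying the hypotheses of dominated convergence against the measure $\mu$.

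For (i), observe that evaluating \eqref{eq:5} at $x=0$ formally gives $\A(0)=\B$, since the two remaining terms carry an explicit prefactor of $x$. To make this rigorous one shows that the integral term tends to $0$ as $x\to 0+$. Using the normalization $\|\M(s)\|\le 1$ noted just after the theorem, the integrand is bounded in norm by $x/(x+s)$. The elementary inequality
$$\frac{x}{x+s}\;\le\;\frac{1}{1+s}\qquad\text{for }0<x\le 1,\; s>0,$$
(which reduces to $xs\le s$) together with \eqref{eq:6} provides a $\mu$-integrable dominator. Since $x/(x+s)\to 0$ pointwise in $s$, dominated convergence gives $\lim_{x\to 0+}\bigl[x\int_{]0,\infty[}(x+s)^{-1}\M(s)\,\mu(\dd s)\bigr]=0$, which combined with continuity of $\A$ at $0$ (part of the hypothesis through $\lim_{x\to 0+}\A(x)$) gives $\A(0)=\B$.

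For (ii), divide \eqref{eq:5} by $x>0$:
$$x^{-1}\A(x) \;=\; x^{-1}\B \;+\; \C \;+\; \int_{]0,\infty[} (x+s)^{-1}\,\M(s)\,\mu(\dd s).$$
The first term tends to $0$ as $x\to\infty$. For the integral, use the analogous inequality
$$\frac{1}{x+s}\;\le\;\frac{1}{1+s}\qquad\text{for }x\ge 1,\;s>0,$$
which again yields a $\mu$-integrable dominator by \eqref{eq:6}, while the integrand tends to $0$ pointwise. Dominated convergence finishes the argument, leaving $\lim_{x\to\infty} x^{-1}\A(x)=\C$.

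There is no real obstacle here: both parts rest on the same elementary comparison of $x/(x+s)$ or $1/(x+s)$ with $1/(1+s)$, which converts the standing hypothesis \eqref{eq:6} into the uniform bound required for dominated convergence. The boundedness of $\M$ is what allows us to pass from scalar arguments (as in the proof of Theorem~\ref{thm:CBF}) to the matrix case without any loss.
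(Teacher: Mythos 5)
Your proof is correct and follows essentially the same route as the paper: part (i) by direct evaluation of the representation~\eqref{eq:5}, and part (ii) by dominating the integrand with $(1+s)^{-1}$ for $x\geq 1$ and invoking the Lebesgue Dominated Convergence Theorem together with~\eqref{eq:6}. Your treatment of (i) via an explicit $x\to 0+$ limit is slightly more careful than the paper's one-line remark, but the substance is identical.
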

\begin{proof}
(i) follows from equation~\eqref{eq:5};

Concerning (ii), it is sufficient to prove that
$\mathbf{L}(x) := \int_{]0,\infty[} (x + s)^{-1} \M_\n(s) \, \mu(\dd s) \rightarrow 0$
for $x \rightarrow \infty$.
Indeed, for $ x \geq 1$,
$$\vert \mathbf{L}(x) \vert \leq \int_{]0,\infty[} (x + s)^{-1}\, \mu(\dd s) \leq 
\int_{]0,\infty[} (1 + s)^{-1}\, \mu(\dd s) < \infty$$
on account of \eqref{eq:6}. The Lebesgue Dominated Convergence Theorem implies that 
$\lim_{x\rightarrow\infty} \mathbf{L}(x) = 0$, q.e.d. 
\end{proof} 

\begin{lemma}
\begin{equation} \label{eq:a1}
x^\alpha = \frac{\sin(\alpha\,\uppi)}{\uppi} \int_0^\infty \frac{x \, s^{\alpha-1}}{x + s} 
\dd s \qquad \text{for $x \geq 0$}
\end{equation}
\end{lemma}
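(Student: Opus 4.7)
The identity is the classical Stieltjes representation of the power function $x^\alpha$, valid for $0 < \alpha < 1$ (which must be tacitly assumed, since convergence of the integral at $s = 0$ requires $\alpha > 0$ and convergence at $s = \infty$ requires $\alpha < 1$). At $x = 0$ both sides are $0$, so I restrict attention to $x > 0$.

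The plan is to rescale the integration variable to peel off the $x$-dependence and then recognise a Beta integral. Setting $s = x t$ gives $\dd s = x\, \dd t$ and $s^{\alpha - 1} = x^{\alpha - 1}\, t^{\alpha - 1}$, so
$$
\int_0^\infty \frac{x\, s^{\alpha - 1}}{x + s}\, \dd s \;=\; x^\alpha \int_0^\infty \frac{t^{\alpha - 1}}{1 + t}\, \dd t.
$$
Thus it suffices to establish the scalar identity
$$
\int_0^\infty \frac{t^{\alpha - 1}}{1 + t}\, \dd t \;=\; \frac{\uppi}{\sin(\alpha\, \uppi)}.
$$

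To prove this I would use the substitution $u = t/(1 + t)$, i.e.\ $t = u/(1 - u)$, which carries $]0,\infty[$ bijectively onto $]0,1[$ with $1 + t = 1/(1 - u)$ and $\dd t = \dd u/(1 - u)^2$. The integral transforms into
$$
\int_0^1 u^{\alpha - 1} (1 - u)^{-\alpha}\, \dd u \;=\; B(\alpha, 1 - \alpha) \;=\; \Gamma(\alpha)\, \Gamma(1 - \alpha),
$$
and Euler's reflection formula $\Gamma(\alpha)\, \Gamma(1 - \alpha) = \uppi/\sin(\alpha \uppi)$ closes the argument. (A self-contained alternative would be a keyhole contour for $z^{\alpha - 1}/(1 + z)$ with the branch cut on $[0,\infty[$: the residue at $z = -1$ equals $\e^{\ii \uppi (\alpha - 1)}$, and the two horizontal segments contribute $(1 - \e^{2\uppi\ii \alpha})$ times the sought integral, yielding the same answer after division by $2\uppi\ii$.)

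There is no real obstacle here; the only points deserving care are justifying the implicit restriction $0 < \alpha < 1$ (needed for convergence at both endpoints) and, if one chooses the contour route, specifying the branch of $z^{\alpha - 1}$ and bounding the arcs of large and small radius. Both are routine.
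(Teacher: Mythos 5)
Your proof is correct, and it takes a genuinely different route from the paper's. You rescale $s = x t$ to isolate the scalar integral $\int_0^\infty t^{\alpha-1}(1+t)^{-1}\,\dd t$, convert it to the Beta integral $B(\alpha,1-\alpha)$ via $u = t/(1+t)$, and invoke Euler's reflection formula. The paper instead first divides through by $x$ to reduce \eqref{eq:a1} to $x^{\alpha-1} = \frac{\sin(\alpha\uppi)}{\uppi}\int_0^\infty \frac{s^{\alpha-1}}{x+s}\,\dd s$, and then evaluates the right-hand side by writing the Stieltjes kernel $(x+s)^{-1}$ as an iterated Laplace transform: the inner transform gives $\Gamma(\alpha)\, y^{-\alpha}$ and the outer one gives $\Gamma(\alpha)\Gamma(1-\alpha)\, x^{\alpha-1}$, after which the same reflection formula closes the argument. (The paper's first display has a typo, $y^{\alpha-1}$ where $y^{-\alpha}$ is meant, but the computation is otherwise the same in spirit.) Both proofs funnel into $\Gamma(\alpha)\Gamma(1-\alpha) = \uppi/\sin(\alpha\uppi)$; yours is arguably more self-contained and elementary, while the paper's emphasizes the ``Stieltjes transform $=$ double Laplace transform'' viewpoint, which is the structural fact it reuses throughout (e.g.\ in the integral representations \eqref{eq:5} and \eqref{eq:a3}). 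Your explicit remark that $0 < \alpha < 1$ is tacitly required for convergence at both endpoints is a genuine improvement on the paper's statement, which omits this hypothesis even though only $\alpha = 1/2$ is ever used.
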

\begin{proof}
Identity~\eqref{eq:a1} is equivalent to the identity
\begin{equation} \label{eq:a2}
x^{\alpha-1} = \frac{\sin(\alpha\,\uppi)}{\uppi} \int_0^\infty \frac{s^{\alpha-1}}{x + s} 
\dd s \qquad\text{ for $x \geq 0$}
\end{equation} 

In order to prove the last identity we use the fact that the Stieltjes transform is an iterated 
Laplace transform and
$$\int_0^\infty s^{\alpha-1} \, \e^{-y s}\, \dd s = y^{\alpha-1} \, \int_0^\infty 
z^{\alpha-1}\, \e^{-z} \, \dd z = \Gamma(\alpha)\, y^{\alpha-1}$$
$$\Gamma(\alpha) \int_0^\infty y^{-\alpha} \, \e^{-x y} \, \dd y = \Gamma(\alpha)\,
\Gamma(1-\alpha) \, x^{\alpha-1} = \frac{\uppi}{\sin(\alpha\, \uppi)} x^{\alpha-1}$$ 
which proves \eqref{eq:a2}. 
\end{proof} 

This suggests an integral definition of the square root $\B^{1/2}$:
\begin{equation} \label{eq:a3}
\B^{1/2} = \uppi \int_0^\infty \B\, (s\,\I + \B)^{-1}\, 
s^{-1/2}\, \dd s
\end{equation}
provided that $\B$ has no non-positive real eigenvalue.

If $f$ is a CBF then there are two non-negative real numbers $a, b$ and a positive Radon 
measure $\rho$ on $\mathbb{R}_+$ such that
$$f(x) = a + b \, x + \int_{]0,\infty[} \frac{x}{s + x} \rho(\dd s)$$
Using this formula the function $f$ can be extended to $\mathcal{M}^\mathbb{C}$:
\begin{equation} \label{eq:defbyCBF}
f(\B) := a \, \I + b \, \B + \int_{]0,\infty[} \B\, (s \, \I + \B)^{-1} \, \rho(\dd s)
\end{equation}

The function $x^{1/2}$ is a CBF and equation~\eqref{eq:a3} is a particular case of \eqref{eq:defbyCBF}.  

In order to prove that the square root $\B^{1/2}$ is a square root in the algebraic sense 
we shall use the following theorem \cite{Schilling11}, Theorem~4.1 (5):
\begin{theorem}
If $f, g$ and their pointwise product $f\, g$ are CBFs then 
$(f\, g)(\B) = f(\B)\, g(\B)$ for every matrix $\B \in \mathcal{M}^\mathbb{C}$, 
where $f(\B)$ and $g(\B)$ are defined by equation~\eqref{eq:defbyCBF}.
\end{theorem}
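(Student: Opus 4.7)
The plan is to reduce the matrix identity to its scalar counterpart $(fg)(\lambda) = f(\lambda)\,g(\lambda)$ by means of the spectral decomposition: first handle diagonalizable $\B$ directly, then extend to all admissible $\B$ by density and continuity. Throughout, ``admissible'' means $\B$ whose spectrum avoids $]-\infty,0]$, which is the natural domain of $f(\cdot)$ as defined by \eqref{eq:defbyCBF}.

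The first step would be to verify that the functional calculus defined by \eqref{eq:defbyCBF} is covariant under similarity. Since $(s\I + P\B P^{-1})^{-1} = P(s\I + \B)^{-1} P^{-1}$ and the integration in \eqref{eq:defbyCBF} is against a scalar Radon measure $\rho$, the conjugation by $P$ passes through the integral, yielding $f(P\B P^{-1}) = P\, f(\B)\, P^{-1}$ for every invertible $P$. Applied to a diagonal matrix $D = \mathrm{diag}(\lambda_1,\ldots,\lambda_d)$ with $\lambda_i \notin \,]-\infty,0]$, the same observation shows that the calculus acts entrywise, $f(D) = \mathrm{diag}\bigl(f(\lambda_1),\ldots,f(\lambda_d)\bigr)$, each $f(\lambda_i)$ being interpreted as the analytic continuation of $f$ to $\mathbb{C}\setminus\,]-\infty,0]$.

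For a diagonalizable $\B = P D P^{-1}$ these two facts immediately reduce the theorem to the scalar identity evaluated at each eigenvalue:
\begin{equation*}
f(\B)\,g(\B) = P\, f(D)\,g(D)\, P^{-1} = P\, \mathrm{diag}\!\bigl(f(\lambda_i)\,g(\lambda_i)\bigr)\, P^{-1} = P\,(fg)(D)\,P^{-1} = (fg)(\B),
\end{equation*}
where $f(\lambda_i)\,g(\lambda_i) = (fg)(\lambda_i)$ holds by analytic continuation of the defining scalar identity. The general case then follows by density of the diagonalizable matrices in the open set $\mathcal{D}\subset \mathcal{M}^{\mathbb{C}}_d$ of admissible matrices, together with continuity of $\B \mapsto f(\B)$ on $\mathcal{D}$. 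The latter comes from dominated convergence applied to \eqref{eq:defbyCBF}: for each fixed $s > 0$ the integrand $\B(s\I + \B)^{-1}$ is continuous in $\B \in \mathcal{D}$, and on any compact $K \subset \mathcal{D}$ one has a bound $\|\B(s\I + \B)^{-1}\| \leq C_K/(1+s)$, which is $\rho$-integrable by \eqref{eq:6}.

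The main obstacle is producing that locally uniform resolvent bound, because for non-normal matrices the norm $\|(s\I+\B)^{-1}\|$ can exceed the reciprocal spectral distance by a large factor that grows sharply as the spectrum approaches the cut. On a compact $K \subset \mathcal{D}$ the spectrum stays uniformly away from $]-\infty,0]$, so for small $s$ one controls $\|\B(s\I + \B)^{-1}\|$ directly, while for $s > \sup_{\B\in K}\|\B\|$ the Neumann expansion $\B(s\I+\B)^{-1} = \B/s - \B^{2}/s^{2} + \dots$ supplies the required $1/(1+s)$ decay. These estimates glue into a single $\rho$-integrable majorant, and continuity of both sides of the claimed identity extends it from the dense diagonalizable locus to all of $\mathcal{D}$.
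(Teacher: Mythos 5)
Your argument is sound, but it is worth noting up front that the paper does not prove this statement at all: it is imported verbatim from Theorem~4.1(5) of \cite{Schilling11}, where it is established in the setting of generators of $C_0$-semigroups on Banach spaces by working directly with the representing measures of $f$, $g$ and $fg$ --- no spectral decomposition is available in that generality. Your route is therefore genuinely different and genuinely more elementary: similarity covariance of \eqref{eq:defbyCBF}, reduction on the diagonal to the scalar identity $(fg)(\lambda)=f(\lambda)\,g(\lambda)$ (valid on all of $\mathbb{C}\setminus\,]-\infty,0]$ by the identity theorem, since both sides are analytic there and agree on $]0,\infty[$), then density of diagonalizable matrices plus continuity. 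All three ingredients check out: the resolvent identity $(s\I+P\B P^{-1})^{-1}=P(s\I+\B)^{-1}P^{-1}$ does pass through the scalar measure $\rho$; matrices with simple spectrum are dense, and since the spectrum varies continuously a small perturbation of an admissible $\B$ stays admissible; and your two-regime majorant $C_K/(1+s)$ is exactly what the integrability condition \eqref{eq:6} needs for dominated convergence. What your proof buys is a self-contained argument in the finite-dimensional case, which is all the paper ever uses; what it gives up is the infinite-dimensional scope of Schilling's original theorem, where diagonalization and density of ``nice'' operators are unavailable. The one point to tighten is the phrase ``for every matrix $\B\in\mathcal{M}^{\mathbb{C}}$'' in the statement: as you observe, the calculus \eqref{eq:defbyCBF} only makes sense when the spectrum avoids $]-\infty,0]$, and your restriction to that domain is the correct reading --- the paper itself imposes exactly this condition when defining $\B^{1/2}$.
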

The functions $f(x) = g(x) = x^{1/2}$ and $f\, g$ are CBFs, 
and the definition \eqref{eq:a3} has the form of equation~\eqref{eq:defbyCBF},
hence
\begin{equation} \label{eq:sqroot} 
\B^{1/2} \, \B^{1/2} = \B
\end{equation}

The two square roots of a matrix $\B$ in the algebraic sense are defined as solutions $\mathbf{Y}$ of the equation
\begin{equation} \label{eq:rooteq}
\mathbf{Y}^2 - \B = 0
\end{equation}
If a matrix $\B \in \mathcal{M}_d^\mathbb{C}$ has no non-positive real eigenvalue
then equation~\eqref{eq:rooteq}  has a unique solution with the property that its eigenvalues lie in the
open right half of the complex plane \cite{Higham87}. This particular solution is called the 
principal square root.

\begin{lemma} \label{lem:sqr}
If $\Re \B > 0$  
then the square root $\mathbf{Y} = \B^{1/2}$ defined by equation~\eqref{eq:a3} is the principal 
square root. 
\end{lemma}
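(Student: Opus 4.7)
The plan is to identify the eigenvalues of $\Y := \B^{1/2}$ and show they all have positive real part; together with $\Y^2 = \B$ from equation~\eqref{eq:sqroot} and the uniqueness result of Higham cited just before the lemma, this forces $\Y$ to be the principal square root.

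First I would unpack the hypothesis. The assumption $\Re \B > 0$ means that the Hermitian matrix $(\B+\B^\dag)/2$ is positive definite, so for every nonzero $\vv \in \mathbb{C}^d$ one has $\Re(\vv^\dag \B \vv) > 0$. In particular, if $\lambda$ is an eigenvalue of $\B$ with eigenvector $\vv$, then $\vv^\dag \B \vv = \lambda\, \|\vv\|^2$, whence $\Re \lambda > 0$. Thus $\B$ has no non-positive real eigenvalue, $(s\I + \B)^{-1}$ is well defined for every $s \geq 0$, and the integrand in \eqref{eq:a3} is continuous in $s \in \mathbb{R}_+$ with an integrable $s^{-1/2}$ singularity at $0$ and $s^{-3/2}$ decay at infinity; the integral converges in operator norm, so $\Y$ is a well-defined matrix.

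Next I would use the resolvent to transport the integral to the spectrum. For an eigenpair $(\lambda,\vv)$ of $\B$,
\[
\B\,(s\,\I + \B)^{-1} \vv = \frac{\lambda}{s+\lambda}\,\vv,
\]
so $\Y \vv = f(\lambda)\, \vv$ where $f(\lambda) := \uppi^{-1} \int_0^\infty \lambda\,(s+\lambda)^{-1}\,s^{-1/2}\,\dd s$ (adopting the normalization dictated by Lemma \eqref{eq:a1}). By the scalar identity \eqref{eq:a1} with $\alpha = 1/2$ on $\mathbb{R}_+$ and its analytic continuation to the right half-plane $\{\Re \lambda > 0\}$ — both sides being holomorphic there — one obtains $f(\lambda) = \lambda^{1/2}$, with the principal branch that satisfies $\Re \lambda^{1/2} > 0$ whenever $\Re \lambda > 0$. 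Hence every eigenvalue of $\Y$ associated with a simple eigenvector of $\B$ has positive real part.

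The main technical obstacle is the possibility that $\B$ is not diagonalizable, in which case the eigenvalue analysis above must be extended to generalized eigenvectors. I would handle this either (a) by putting $\B$ into Jordan form, using that the resolvent $(s\I+\B)^{-1}$ respects the block structure and that termwise integration against $s^{-1/2}/\uppi$ on each $k\times k$ Jordan block with eigenvalue $\lambda$ produces an upper-triangular matrix whose diagonal entries are $\lambda^{1/2}$, or (b) by a perturbation/continuity argument: the open set $\{\B : \Re \B > 0\}$ contains a dense set of diagonalizable matrices, the map $\B \mapsto \Y$ given by \eqref{eq:a3} is continuous on this open set (dominated convergence against the fixed envelope $s^{-1/2}/(1+s)$), and the spectrum depends continuously on the matrix, so the inclusion $\sigma(\Y) \subset \{\lambda^{1/2} : \lambda \in \sigma(\B)\}$ persists. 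In either case we conclude $\sigma(\Y) \subset \{\Re z > 0\}$. Combined with $\Y^2 = \B$ from equation~\eqref{eq:sqroot} and the cited uniqueness of the square root with spectrum in the open right half-plane, this identifies $\Y$ as the principal square root of $\B$.
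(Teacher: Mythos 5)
Your argument is correct, but it reaches the conclusion by a genuinely different route from the paper's. The paper never examines the spectrum of $\Y$ directly: it computes the Hermitian part of the integrand of \eqref{eq:a3},
\[
\frac{1}{2} \left[\B\,(\B+s\,\I)^{-1}+\left(\B^\dag+s\,\I\right)^{-1}\B^\dag\right]
=\mathbf{U}^\dag\left[\B^\dag\,\B+s\,\Re\B\right]\mathbf{U},
\qquad \mathbf{U}:=(\B+s\,\I)^{-1},
\]
which is positive definite for every $s>0$ precisely because $\Re\B>0$; integrating against the positive weight $s^{-1/2}$ then gives the matrix inequality $\Re\Y>0$. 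Since $\Re\Y>0$ forces every eigenvalue of $\Y$ to have positive real part (evaluate $\vv^\dag\,\Y\,\vv$ on an eigenvector), Higham's uniqueness result applies exactly as in your final step. Your route instead identifies $\sigma(\Y)$ through the resolvent and the scalar identity \eqref{eq:a1}, i.e.\ it is essentially the holomorphic functional calculus; the price is the Jordan-block (or density-plus-continuity) detour for non-diagonalizable $\B$, which the paper's two-line algebraic computation avoids entirely. What you gain is finer information and potentially greater generality: your argument shows that $\sigma(\Y)$ consists exactly of the principal square roots of the eigenvalues of $\B$, and it would go through under the weaker hypothesis that $\B$ has no eigenvalue in $]-\infty,0]$ (the natural condition for the principal square root to exist), whereas the positivity of $\B^\dag\,\B+s\,\Re\B$ genuinely uses $\Re\B>0$. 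The paper's argument, in exchange, is shorter, yields the stronger statement that the whole numerical range of $\Y$ lies in the open right half-plane, and runs structurally parallel to the proof of Theorem~\ref{thm:a1}. A minor point in your favour: reading the prefactor in \eqref{eq:a3} as $1/\uppi$ rather than the printed $\uppi$ is indeed what consistency with \eqref{eq:a1} at $\alpha=1/2$ requires.
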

\begin{proof}
Equation~\eqref{eq:sqroot} implies that $\mathbf{Y}$ satisfies equation~\eqref{eq:rooteq}. The real part of the integrand of \eqref{eq:a3} can be expressed in the form 
\begin{multline*}
\frac{1}{2} \left[\B\, \left( \B + s\, \I\right)^{-1} +  \left(\B^\dag + s \, \I\right)^{-1} \, \B^\dag\right] =\\
\frac{1}{2} \left(\B^\dag + s \, \I\right)^{-1}\, \left[ \B^\dag \, (\B + s\, \I) + \left(\B^\dag + s \, \I\right) \, \B\right]\,
(\B + s\, \I)^{-1} =  \mathbf{U}^\dag \,\left[\B^\dag\, \B + s\, \Re \B\right]\,\mathbf{U}
\end{multline*}
where $\mathbf{U} := (\B + s \, \I)^{-1}$. Hence for $s > 0$ the real part of the integrand of \eqref{eq:a3} is positive
 and therefore the square root defined by \eqref{eq:a3} satisfies the inequality $\Re \B^{1/2} > 0$.
This result along with equation~\eqref{eq:sqroot} implies that the right-hand side of \eqref{eq:a3}
is the principal square root.  
\end{proof} 

\begin{theorem} \label{thm:a1}
If $\B \in \mathcal{M}^\mathbb{C}_d$ and $\Im \B \geq 0$ then the square root defined by \eqref{eq:a3} satisfies the inequality 
$\Im \B^{1/2} \geq 0$. 
\end{theorem}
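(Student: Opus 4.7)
My strategy is to show that for every $s > 0$ the matrix $\B\,(s\,\I + \B)^{-1}$ appearing in the integrand of \eqref{eq:a3} already has nonnegative-semidefinite imaginary part; since the scalar weight $s^{-1/2}$ is positive on $\mathbb{R}_+$, the integral inherits this property and delivers $\Im\,\B^{1/2} \geq 0$.

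The algebraic step is a short resolvent calculation. I would start from $\B\,(s\,\I + \B)^{-1} = \I - s\,(s\,\I + \B)^{-1}$ together with its Hermitian conjugate $(s\,\I + \B^\dag)^{-1}\,\B^\dag = \I - s\,(s\,\I + \B^\dag)^{-1}$, subtract, divide by $2\ii$, and apply the resolvent identity $\mathbf{A}^{-1} - \mathbf{C}^{-1} = \mathbf{A}^{-1}(\mathbf{C} - \mathbf{A})\mathbf{C}^{-1}$ with $\mathbf{A} = s\,\I + \B^\dag$, $\mathbf{C} = s\,\I + \B$. Using $\B - \B^\dag = 2\ii\,\Im\B$, this should produce the sandwich
\begin{equation*}
\Im\!\left[\B\,(s\,\I + \B)^{-1}\right] = s\,(s\,\I + \B^\dag)^{-1}\,\Im\B\,(s\,\I + \B)^{-1}.
\end{equation*}

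The Hermitian positivity check is then immediate: for any $\vv \in \mathbb{C}^d$, setting $\w := (s\,\I + \B)^{-1}\,\vv$ gives $\vv^\dag\,\Im[\B\,(s\,\I + \B)^{-1}]\,\vv = s\,\w^\dag\,\Im\B\,\w$, which is nonnegative because $s > 0$ and $\Im\B \geq 0$ by hypothesis. Integrating this pointwise Hermitian inequality against the positive weight $s^{-1/2}\,\dd s$ on $\mathbb{R}_+$ preserves positive semidefiniteness of the imaginary part and finishes the proof.

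I do not anticipate any real obstacle, only one point to watch: the inverse $(s\,\I + \B)^{-1}$ must exist for every $s > 0$, which is guaranteed by the standing assumption stated immediately after \eqref{eq:a3} that $\B$ carries no non-positive real eigenvalue. Should one wish to bypass that assumption, I would regularize by $\B \mapsto \B + \ii\,\varepsilon\,\I$, whose imaginary part is strictly positive so that its spectrum lies strictly off the real axis, run the above argument for the perturbation, and recover the general case by letting $\varepsilon \to 0^+$ via dominated convergence.
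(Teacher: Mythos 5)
Your proposal is correct and follows essentially the same route as the paper: both arguments reduce to showing $\Im\left[\B\,(s\,\I+\B)^{-1}\right] = s\,\mathbf{U}^\dag\,\Im\B\;\mathbf{U}$ with $\mathbf{U}=(s\,\I+\B)^{-1}$, the paper by directly factoring the difference $\B(\B+s\I)^{-1}-(\B^\dag+s\I)^{-1}\B^\dag$ and you by first writing $\B(s\I+\B)^{-1}=\I-s(s\I+\B)^{-1}$ and invoking the resolvent identity, which is an equivalent computation. Your closing remark about invertibility and the $\ii\varepsilon$ regularization is a sensible added precaution but not needed beyond the paper's standing assumption that $\B$ has no non-positive real eigenvalue.
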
 
\begin{proof}
\begin{multline*}
\frac{1}{2 \ii} \left[ \B \, (\B + s \I)^{-1} - \left(\B^\dag + s\, \I\right)^{-1}\, \B^\dag\right] = \\
\frac{1}{2 \ii} \left(s\, \I + \B^\dag \right)^{-1}\, \left[\left(s \, \I + \B^\dag\right)\, \B
 - \B^\dag (\B + s\, \I)\right] \, (\B + s\, \I)^{-1} = 
s \, \mathbf{U}^\dag\, \Im \B \, \mathbf{U} \geq 0 
\end{multline*}
where $\mathbf{U} := (s\, \I + \B)^{-1}$. In view of equation~\eqref{eq:a3} this implies that 
$\Im \B^{1/2} \geq 0$. 
\end{proof}

In view of Definition~\ref{def:mvCBF} this entails the following important corollary:
\begin{corollary}\label{cor:half}
If $\A(x)$ is a mvCBF then the principal square root $\A(x)^{1/2}$ is a mvCBF.
\end{corollary}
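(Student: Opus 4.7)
The plan is to verify the four defining properties of Definition~\ref{def:mvCBF} for $z\mapsto \A(z)^{1/2}$ as defined by the integral formula \eqref{eq:a3}. Most of the work has been carried out in Lemma~\ref{lem:sqr} and Theorem~\ref{thm:a1}; what remains is to ensure that the formula applies to $\A(z)$ throughout the cut plane and that the resulting function is an analytic continuation from $\mathbb{R}_+$.

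First I would confirm that $\A(z)$ has no strictly negative real eigenvalue for $z\in \mathbb{C}\setminus ]-\infty,0]$, so that \eqref{eq:a3} makes sense with $\B$ replaced by $\A(z)$. For $z=x>0$ the representation \eqref{eq:5} exhibits $\A(x)$ as a sum of positive semi-definite matrices. For $\Im z\neq 0$, if $\A(z)\vv=\lambda\vv$ with $\lambda<0$ then $\vv^\dag \A(z)\vv=\lambda\|\vv\|^2$ is real, hence $\vv^\dag \Im\A(z)\vv=0$. Extending \eqref{eq:5} to complex $z$ yields
\[
\vv^\dag \Im \A(z)\vv = \Im z \Bigl[\vv^\dag \C\vv + \int_{]0,\infty[} \frac{s}{|z+s|^2}\,\vv^\dag \M(s)\vv\,\mu(\dd s)\Bigr],
\]
so vanishing forces $\C\vv=0$ and $\M(s)\vv=0$ for $\mu$-a.e.\ $s$, whence $\vv^\dag \A(z)\vv=\vv^\dag \B\vv\geq 0$, contradicting $\lambda<0$. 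The eigenvalue $0$ is not excluded, but does not obstruct the integral in \eqref{eq:a3}: for positive semi-definite $\A(z)$ one has $\|\A(z)(s\I+\A(z))^{-1}\|\leq 1$, and the weight $s^{-1/2}$ is integrable near $0$, while $\A(z)(s\I+\A(z))^{-1}=\OO(1/s)$ near $\infty$.

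Next I would verify analyticity of $z\mapsto \A(z)^{1/2}$ on $\mathbb{C}\setminus ]-\infty,0]$: the integrand in \eqref{eq:a3} is analytic in $z$ for each $s>0$ (since $-s$ is not an eigenvalue of $\A(z)$ by the previous step), and locally uniform bounds permit differentiation under the integral sign. By Lemma~\ref{lem:sqr}, on $\mathbb{R}_+$ the formula reproduces the principal square root, so uniqueness of analytic continuation identifies the result as the continuation of $\A(x)^{1/2}$ to the cut plane.

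With the square root established, the imaginary-part inequality is a direct application of Theorem~\ref{thm:a1}: for $\Im z>0$ the mvCBF property of $\A$ gives $\Im \A(z)\geq 0$, hence $\Im \A(z)^{1/2}\geq 0$; the case $\Im z<0$ follows by Schwarz reflection, which applies because $\A$ is real on $\mathbb{R}_+$. Finally, by Corollary~\ref{cor:coeffs}(i), $\lim_{x\to 0+}\A(x)=\B$ exists and is positive semi-definite, so continuity of the principal square root yields $\lim_{x\to 0+}\A(x)^{1/2}=\B^{1/2}$, which is real. The main obstacle I expect is the eigenvalue check: Lemma~\ref{lem:sqr} and Theorem~\ref{thm:a1} are nominally stated for matrices with no non-positive real eigenvalue, and tracking the admissible domain of \eqref{eq:a3} carefully (strictly negative excluded, zero tolerated by integrability) is the only nontrivial step.
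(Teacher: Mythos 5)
Your proposal is correct and follows the same route as the paper, which presents the corollary as an immediate consequence of Theorem~\ref{thm:a1} together with Definition~\ref{def:mvCBF}. You simply make explicit the routine verifications the paper leaves implicit (applicability of \eqref{eq:a3} on the cut plane, analyticity, the reflection to $\Im z<0$, and the boundary limit at $0+$), all of which check out.
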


\begin{definition} \label{def:Stieltjes}
A matrix-valued function $\A: \mathbb{R}_+ \cup \{ 0 \} \rightarrow \mathcal{M}_d$ 
is said to be a Stieltjes function if 
it has an analytic continuation to $\mathbb{C}\setminus\, ]-\infty,0]$ 
which satisfies the inequality
\begin{equation} \label{eq:1a} 
\Im z \, \Im \A(z) \leq 0
\end{equation}
and $\lim_{x \rightarrow 0+} \A(x)$ exists and is real positive semi-definite.
\end{definition}

If $\B \in \mathcal{M}^\mathbb{C}_d$ is invertible and $\Im \B \geq 0$ then
$$\Im \B^{-1} = \frac{1}{2 \uppi \ii} \left(\B^{-1} - \B^{\dag -1}\right) =
\frac{1}{2 \uppi \ii} \B^{-1}\, \left(\B^\dag - \B\right)\, \B^{\dag -1} \leq 0$$
This proves the following lemma:
\begin{lemma} \label{lem:inverse}
If $\A(x)$ is a mvCBF and $\A(z)$ is invertible for $z \not\in \,]-\infty,0]$
then $\A(x)^{-1}$ is a matrix-valued Stieltjes function.
\end{lemma}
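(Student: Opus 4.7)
The goal is to verify the three clauses of Definition~\ref{def:Stieltjes} for $\A(z)^{-1}$. Since $\A$ is an mvCBF it is analytic on $\mathbb{C}\setminus\,]-\infty,0]$, and the invertibility hypothesis together with Cramer's rule provides analyticity of $\A(z)^{-1}$ on the same domain.

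For the imaginary-part condition, I would simply invoke the computation displayed immediately before the lemma, which shows that for any invertible $\B$ with $\Im\B\geq 0$ one has
$$\Im\B^{-1}=\tfrac{1}{2\ii}\B^{-1}(\B^\dag-\B)\B^{\dag -1}=-\B^{-1}(\Im\B)\B^{\dag -1}\leq 0,$$
the inequality holding because $\vv^\dag\B^{-1}(\Im\B)\B^{\dag -1}\vv=\w^\dag(\Im\B)\w\geq 0$ with $\w:=\B^{\dag -1}\vv$. Applying this with $\B=\A(z)$ for $\Im z>0$ gives $\Im\A(z)^{-1}\leq 0$. For $\Im z<0$, Schwarz reflection (valid because $\A$ is real on $\mathbb{R}_+$) yields $\A(\overline z)=\overline{\A(z)}$, hence $\Im\A(z)\leq 0$, and the same algebraic identity gives $\Im\A(z)^{-1}\geq 0$. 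For real $z>0$ both sides vanish. In all cases $\Im z\cdot\Im\A(z)^{-1}\leq 0$.

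It remains to examine the boundary value at $0$. Corollary~\ref{cor:coeffs}(i) gives $\A(0)=\B$, which is real and positive semi-definite by Theorem~\ref{thm:CBF}. If $\B$ is invertible, then $\lim_{x\to 0+}\A(x)^{-1}=\B^{-1}$ exists, is real, and is positive definite. The only slightly delicate point, and really the main obstacle, is that the hypothesis ``$\A(z)$ is invertible for $z\not\in\,]-\infty,0]$'' does not literally include $z=0$; the natural reading is that $\B=\A(0)$ is also invertible, equivalently positive definite, and once this is granted the argument concludes immediately.
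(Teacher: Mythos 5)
Your proposal is correct and rests on exactly the same key step as the paper, namely the algebraic identity $\Im \B^{-1} = \frac{1}{2\ii}\B^{-1}(\B^\dag-\B)\B^{\dag-1} \leq 0$ displayed immediately before the lemma (you have silently corrected the paper's spurious $\uppi$ in that display). The additional details you supply --- analyticity of the inverse, the Schwarz-reflection argument for $\Im z<0$, and the flag that the behaviour at $x=0+$ needs $\A(0)$ invertible (a point the paper does not address, and which is genuinely delicate since Theorem~\ref{thm:S} permits an $x^{-1}\,\C$ singularity) --- are all sound and only make the paper's one-line argument explicit.
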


\begin{theorem}\label{thm:S}
If the matrix-valued function $\A(x)$ is a Stieltjes function then there are two positive 
semi-definite 
matrices $\B$ and $\C$, a positive Radon measure $\mu$ on $\mathbb{R}_+$ satisfying the
inequality~\eqref{eq:6} and a
measurable $\mu$-almost everywhere bounded and positive semi-definite matrix-valued function 
$\mathbf{M}$ such that 
\begin{equation} \label{eq:S}
\A(x) = \B + x^{-1}\, \C + \int_{]0,\infty[} (x + s)^{-1} \mathbf{M}(s)\, \mu(\dd s)
\end{equation}
\end{theorem}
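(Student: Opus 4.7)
The plan is to mimic the proof of Theorem~\ref{thm:CBF} step by step, substituting the scalar Stieltjes representation for the scalar CBF representation. The structural differences between the two integral formulas---a $b_\vv/x$ summand and the absence of the outer $x$ factor---translate verbatim into the extra $x^{-1}\C$ summand and the removal of the prefactor $x$ from the integral in \eqref{eq:S}.

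First I fix $\vv \in \mathbb{R}^d$ and consider $s_\vv(x) := \vv^\top \A(x) \vv$. The inequality $\Im z \, \Im \A(z) \leq 0$, together with the Hermitian-part identity from the proof of Theorem~\ref{thm:CBF}, forces $\Im z \, \Im s_\vv(z) \leq 0$; combined with $\lim_{x\to 0+} s_\vv(x) \geq 0$, this makes $s_\vv$ a scalar Stieltjes function. The classical scalar representation \cite{BernsteinFunctions} supplies $a_\vv, b_\vv \geq 0$ and a positive Radon measure $\sigma_\vv$ on $\mathbb{R}_+$ satisfying \eqref{eq:6} with
$$s_\vv(x) = a_\vv + \frac{b_\vv}{x} + \int_{]0,\infty[} \frac{\sigma_\vv(\dd r)}{x + r}.$$
Dominated convergence, in the spirit of Corollary~\ref{cor:coeffs}(ii), identifies $a_\vv = \lim_{x\to\infty} s_\vv(x)$ and $b_\vv = \lim_{x\to 0+} x\, s_\vv(x)$. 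Both are nonnegative quadratic forms in $\vv$, so polarization yields positive semi-definite symmetric matrices $\B$ and $\C$ with $a_\vv = \vv^\top \B \vv$ and $b_\vv = \vv^\top \C \vv$.

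Setting $\A_0(x) := \A(x) - \B - x^{-1} \C$ gives $\vv^\top \A_0(x) \vv = \int (x+r)^{-1} \sigma_\vv(\dd r)$. Polarizing via $2 N_{\vv,\w}(I) := \sigma_{\vv+\w}(I) - \sigma_\vv(I) - \sigma_\w(I)$ and reading $N_{\vv,\w}$ off from $\A_0$ by the Stieltjes inversion formula
$$N_{\vv,\w}(I) = \lim_{\varepsilon\to 0+} \frac{1}{\uppi} \int_I \Im\bigl[-\vv^\top \A_0(-s + \ii\varepsilon) \w\bigr]\, \dd s$$
produces a matrix-valued Radon measure $\N$ obeying $\vv^\top \N(I) \w = N_{\vv,\w}(I)$ and $\vv^\top \N(I) \vv = \sigma_\vv(I) \geq 0$. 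Setting $\mu := \trace \N$, the Cauchy--Schwarz-type bound \eqref{eq:RN} transfers verbatim, and the Radon--Nikodym theorem then furnishes a $\mu$-a.e.\ bounded, positive semi-definite density $\M$ with $\N(\dd r) = \M(r)\,\mu(\dd r)$, yielding \eqref{eq:S}.

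The main obstacle is the identification of $\C$: unlike the CBF case, where Corollary~\ref{cor:coeffs}(ii) delivered $\C$ from an easy limit at $+\infty$, here one must verify existence of $b_\vv = \lim_{x\to 0+} x\, s_\vv(x)$. The key estimate is $\int_{]0,1]} \sigma_\vv(\dd r) \leq 2 \int_{]0,\infty[} (1+r)^{-1} \sigma_\vv(\dd r) < \infty$, which lets dominated convergence drive $\int x/(x+r)\, \sigma_\vv(\dd r) \to 0$. Once both coefficient limits are in place, the remaining polarization and Radon--Nikodym steps are word-for-word identical to those in Theorem~\ref{thm:CBF}.
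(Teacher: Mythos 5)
Your proposal is correct and follows exactly the route the paper intends: the paper gives no detailed proof, stating only that the argument is analogous to that of Theorem~\ref{thm:CBF} with the scalar Stieltjes representation $f(x) = a + b/x + \int_{]0,\infty[}(x+s)^{-1}\mu(\dd s)$ as the starting point, and you carry out precisely that analogy. Your extra care in establishing $b_\vv = \lim_{x\to 0+} x\, s_\vv(x)$ via the bound $\sigma_\vv(]0,1]) \leq 2\int_{]0,\infty[}(1+r)^{-1}\sigma_\vv(\dd r)$ is a correct and welcome filling-in of a detail the paper leaves implicit.
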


The proof of this theorem, based on the integral representation 
$$f(x) = a + \frac{b}{x} + \int_{]0,\infty[} (x + s)^{-1} \, \mu(\dd s)$$
\cite{BernsteinFunctions}
with $a, b \geq 0$ and a positive Radon measure $\mu$ satisfying equation~\eqref{eq:6},
is analogous to the proof of Theorem~\ref{thm:CBF}. 

Equation~\eqref{eq:S} provides an analytic continuation of the function $\A(x)$ to the complex plane cut along the negative real semi axis. 

Comparison of Theorems~\ref{thm:CBF} and \ref{thm:S} yields the following corollary
\begin{corollary} \label{cor:toS}
If $\A$ is a mvCBF then $x^{-1}\, \A(x)$ is a matrix-valued Stieltjes
function.
\end{corollary}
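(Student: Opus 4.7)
The plan follows the hint given just before the corollary: compare the integral representations established in Theorems~\ref{thm:CBF} and \ref{thm:S}. Given a mvCBF $\A$, Theorem~\ref{thm:CBF} produces a decomposition
$$\A(x) = \B + x\,\C + x\int_{]0,\infty[}(x+s)^{-1}\M(s)\,\mu(\dd s),$$
with $\B,\C$ positive semi-definite, $\mu$ satisfying \eqref{eq:6}, and $\M$ a $\mu$-essentially bounded positive semi-definite matrix-valued function. Dividing through by $x$ yields
$$x^{-1}\,\A(x) = \C + x^{-1}\,\B + \int_{]0,\infty[}(x+s)^{-1}\M(s)\,\mu(\dd s),$$
which is exactly the normal form \eqref{eq:S} of a Stieltjes function, with the positive semi-definite matrices $\B$ and $\C$ playing interchanged roles. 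All hypotheses on the ingredients carry over unchanged from Theorem~\ref{thm:CBF} to the requirements of Theorem~\ref{thm:S}.

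To turn this observation into a proof I would then argue that any matrix-valued function admitting a representation of the form \eqref{eq:S} is a Stieltjes function in the sense of Definition~\ref{def:Stieltjes}. This is a short term-by-term check. The right-hand side extends analytically to $\mathbb{C}\setminus\,]-\infty,0]$ because $z^{-1}$ and $(z+s)^{-1}$ do. For $\Im z > 0$ one has
$$\Im z^{-1} = -\,\Im z\,|z|^{-2} \leq 0, \qquad \Im (z+s)^{-1} = -\,\Im z\,|z+s|^{-2} \leq 0,$$
so $\Im[z^{-1}\B] \leq 0$ and $\Im[(z+s)^{-1}\M(s)] \leq 0$ pointwise in $s$ (since $\B$ and $\M(s)$ are real positive semi-definite), whereas $\C$ is real. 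Integrating against the positive measure $\mu$ preserves the sign, so $\Im z\cdot \Im[z^{-1}\A(z)] \leq 0$, which is the defining inequality \eqref{eq:1a}.

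The main obstacle is essentially cosmetic: one has to notice that the factor $x$ appearing in front of the integral term of \eqref{eq:5} is precisely what, upon cancellation, converts a CBF representation into a Stieltjes one. A small subtlety is that the behaviour of $x^{-1}\A(x)$ at $x = 0+$ is governed by the singular term $x^{-1}\B$; this is consistent with the standard Stieltjes framework, in which the value at zero is permitted to diverge, and is the only point where Definition~\ref{def:Stieltjes} needs to be read with this mild interpretive latitude.
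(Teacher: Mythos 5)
Your proof is correct and follows essentially the same route the paper intends, namely dividing the representation \eqref{eq:5} by $x$ and matching the result against the normal form \eqref{eq:S}. Your additional check that a function of the form \eqref{eq:S} satisfies the inequality \eqref{eq:1a} is a legitimate and welcome fill-in (Theorem~\ref{thm:S} is stated only in one direction), and your closing remark correctly pinpoints the one genuine wrinkle: the term $x^{-1}\,\B$ makes $\lim_{x\rightarrow 0+} x^{-1}\A(x)$ diverge whenever $\A(0)\neq 0$, so Definition~\ref{def:Stieltjes} must be read with that latitude for the corollary (and its later use in Theorem~\ref{thm:K}) to go through.
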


\section{Plane waves in an anisotropic viscoelastic medium.}
\label{sec:plwav}

We shall consider complex plane wave solutions 
\begin{equation} \label{eq:plwavc}
\exp\left(-\ii \omega t + \ii \kk\cdot \x\right)\, \aA 
\end{equation}
with $\kk, \aA \in \mathbb{C}^3$,  of the homogeneous equation of motion 
\begin{equation} \label{eq:homoeqmotion}
\rho \, \u_{,tt} = \nabla \cdot[\tens{G}(t)\ast \nabla \u_{,t}]  \qquad t > 0,\quad
\end{equation}
where $\tens{G}(t)\ast \nabla \u_{,t}$ is $G_{ijrs}\ast \partial u_{r,s}/\partial t$ 
componentwise. 

Substituting \eqref{eq:plwavc} in \eqref{eq:homoeqmotion} we obtain an eigenproblem
\begin{equation} \label{eq:eigvone}
\left[\kappa^2 \, Q_{ijrs}(-\ii \omega) n_j\, n_s + \rho\, \omega^2\, \delta_{ir}\right] a_r = 0
\end{equation}
where we have set $\kk = -\ii \,\kappa\, \n$ with $\n \in \mathbb{R}^3$, $\n^2 = 1$ and 
\begin{equation}
\tens{Q}(p) := p \, \tilde{\tens{G}}(p)
\end{equation}

Let $\Q_\n$ denote the rank-2 tensor $Q_{ijrs}(-\ii \omega) n_j\, n_s$. We shall assume
the strong ellipticity condition\footnote{In viscoelastic terminology this condition 
means that the medium is a viscoelastic solid.} 
$$G_{klrs}^{\infty}\, n_l\, n_s\, a_k\, a_r > 0 \quad \text{for non-zero vectors $\aA, \n$}$$
where $\tens{G}^\infty = \lim_{t\rightarrow \infty} \tens{G}(t) = \lim_{p\rightarrow 0} \tens{Q}(p)$.
Let $\tens{G}^\infty$ denote the matrix $G^\infty_{klrs} \, n_l \, n_s$. 
The matrix-valued function $\Q_\n(p)$ is a CBF, hence it is non-decreasing and therefore
$$\Q_\n(p) \geq \Q_\n^\infty := \tens{G}^\infty_\n > 0$$
and the matrix $\Q_\n$ is invertible. Consequently we can define the matrix-valued function
\begin{equation} \label{eq:Kndef}
\K_\n(p) := \rho \, p\,\Q_\n(p)^{-1/2}
\end{equation}
and recast equation~\eqref{eq:eigvone} in a simpler form
\begin{equation}  \label{eq:eigvtwo}
\left[\K_\n(-\ii \omega)^2 - \kappa^2 \, \I \right] \, \aA = 0
\end{equation} 

A necessary condition for the existence of a non-trivial solution $\aA$ of \eqref{eq:eigvtwo} is
the dispersion equation:
\index{dispersion equation}
\begin{equation} \label{eq:deteq}
D := \det\left[\K_\n(-\ii \omega)^2 - \kappa^2 \, \I \right] = 0
\end{equation}

Equation~\eqref{eq:eigvtwo} is satisfied if either 
$\K_\n(-\ii \omega) \, \aA = \kappa \, \aA$ or
$\K_\n(-\ii \omega) \, \aA = -\kappa \, \aA$. This alternative corresponds to two different propagation directions $\pm \n$. 

Equation~\eqref{eq:deteq} is an algebraic equation of degree 3 in $\lambda := \kappa^2$, hence it has three roots $\kappa^2 = \lambda_j(\omega,\n)$, $j = 1,2,3$, which are
eigenvalues of $\K_\n(-\ii \omega)^2$. 
For each root 
$\kappa^2 = \lambda_j(\omega,\n)$, $j \in \{ 1, 2, 3\}$ equation~\eqref{eq:eigvtwo} has at least one non-trivial solution 
$\aA \in \mathbb{C}^3$. This solution can be obtained from the expansion of the determinant $D$ in terms 
of the elements of a row and its $2\times2$ minors.
Solutions $\aA$ corresponding to numerically different roots $\lambda_j(\omega,\n)$ 
are linearly independent.
If a root $\lambda_j(\omega,\n)$ is double, e.g. $\lambda_2(\omega,\n) =  \lambda_3(\omega,\n)$
for the given values of $\omega$ and $\n$ then the number of linearly independent 
solutions corresponding to the double eigenvalue can be either 1 or 2. Summarizing 
for each solution of the dispersion equation the equation of motion \eqref{eq:homoeqmotion} has at least one non-trivial complex solution \eqref{eq:plwavc} 
with $\kk = \pm \ii \lambda_j(\omega)^{1/2} \, \n$ and $\A$ satisfying equation~\eqref{eq:eigvtwo}.
The complex conjugate of \eqref{eq:plwavc} is also a solution of \eqref{eq:homoeqmotion},
hence \eqref{eq:homoeqmotion} has a non-trivial plane wave solution
\begin{equation} \label{eq:plwav}
\u(t,\x) = \Re \left[ \exp\left(-\ii \omega t + \ii \kk\cdot \x\right)\, \aA \right]
\end{equation}
for each root $\lambda_j(\omega,\n)$ of $D$.

\begin{theorem} \label{thm:K}
$\K_\n(\cdot)$ is a mvCBF for every $\n \in \mathcal{S}$.
\end{theorem}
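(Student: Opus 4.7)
The plan is to factor $\K_\n(p) = \rho\, p\, \Q_\n(p)^{-1/2}$ into three operations, each of which either preserves the mvCBF class or moves cleanly between the mvCBF and matrix-valued Stieltjes classes introduced in Section~\ref{sec:mvCBF}.

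First I would invoke Corollary~\ref{cor:half} applied to the mvCBF $\Q_\n$ to conclude that the principal square root $\Q_\n(p)^{1/2}$ is again a mvCBF. Second, I would apply Lemma~\ref{lem:inverse} to this square root, obtaining that $\Q_\n(p)^{-1/2}$ is a matrix-valued Stieltjes function; this requires invertibility of $\Q_\n(z)^{1/2}$ throughout the cut plane $\mathbb{C}\setminus ]-\infty,0]$, which through $(\Q_\n^{1/2})^2 = \Q_\n$ reduces to invertibility of $\Q_\n(z)$ itself. Third, I would pre-multiply by $p$ and compare the result to the mvCBF integral representation of Theorem~\ref{thm:CBF}, with the scalar factor $\rho > 0$ being trivially inessential.

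The invertibility step is where I expect the only real work. My plan there is to fix an arbitrary non-zero $\vv \in \mathbb{C}^3$ and pass to the scalar function $q_\vv(z) := \vv^\dag\,\Q_\n(z)\,\vv$. Using the identity $\Im(\vv^\dag \A \vv) = \vv^\dag(\Im \A)\vv$, this is a scalar CBF, real on $\mathbb{R}_+$, and the strong ellipticity condition gives $q_\vv(0+) = \vv^\dag \Q_\n^\infty \vv > 0$. If $q_\vv(z_0) = 0$ at an interior point $z_0$ with $\Im z_0 > 0$, then the non-negative harmonic function $\Im q_\vv$ would attain its minimum in the interior of the upper half plane, hence vanish identically there; the analytic function $q_\vv$ would then be real-valued on an open set, therefore constant, contradicting $q_\vv(0+) > 0$. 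Schwarz reflection handles the lower half plane. Consequently $\Q_\n(z)$ has no null vector in the cut plane and is invertible.

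For the final step, the matrix-valued Stieltjes representation of Theorem~\ref{thm:S} applied to $\Q_\n(p)^{-1/2}$ gives
$$p\, \Q_\n(p)^{-1/2} = \C_S + p\, \B_S + p\int_{]0,\infty[}(p+s)^{-1}\,\M(s)\,\mu(\dd s),$$
with $\B_S, \C_S \geq 0$, a positive Radon measure $\mu$ satisfying~\eqref{eq:6}, and a bounded positive semi-definite density $\M$. This is precisely the integral representation of Theorem~\ref{thm:CBF}, with $\lim_{p \to 0+} p\,\Q_\n(p)^{-1/2} = \C_S$ real, so $p\,\Q_\n(p)^{-1/2}$ and hence $\K_\n(p) = \rho\, p\, \Q_\n(p)^{-1/2}$ is a mvCBF, completing the argument.
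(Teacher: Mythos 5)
Your proof is correct and follows essentially the same route as the paper: Corollary~\ref{cor:half} for the square root, the inversion lemma to pass between the mvCBF and Stieltjes classes, and the duality under multiplication by $p$ (the paper divides by $p$ via Corollary~\ref{cor:toS} \emph{before} inverting, you invert first and then multiply by $p$; the two orders are equivalent). The one substantive addition is your minimum-principle argument for the invertibility of $\Q_\n(z)$ throughout the cut plane --- a hypothesis of Lemma~\ref{lem:inverse} that the paper verifies only for real $p$ --- and that verification is correct and worth having.
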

\begin{proof}
$\Q_\n(p)$ is a mvCBF, hence $\Q_\n(p)^{1/2}$ is a mvCBF (Corollary~\ref{cor:half}). By 
Corollary~\ref{cor:toS}
$p^{-1}\, \Q_\n(p)^{1/2}$ is a matrix-valued Stieltjes function.
By Lemma~\ref{lem:inverse}  its inverse $p\, \Q_\n(p)^{-1/2}$ is a CBF. 
\end{proof} 

Since $\K_\n(0) = 0$, Theorem~\ref{thm:CBF} implies that 
\begin{equation} \label{eq:6a}
\K_\n(p) = p \, \B_\n + p \int_{]0,\infty[} (p + r)^{-1} \, \M_\n(r)\, \mu(\dd r)
\end{equation}
where $\mu$ is a positive Radon measure satisfying the inequality \eqref{eq:6},
$\M_\n(r)$ is an $S$-valued function bounded $\mu$-almost everywhere by 1
and $\B_\n \in S$ is positive semi-definite. The subscript $\n$ indicates the parametric
dependence on $\n$. 
Corollary~\ref{cor:coeffs}  implies that 
$$\B_\n = \lim_{p \rightarrow \infty} \left[p^{-1} \, \K_\n(p)\right] = 
\rho^{1/2}\, \left[\lim_{p \rightarrow \infty} \Q_\n(p)\right]^{-1/2} =
\left[\rho^{-1} \,\G^0_\n\right]^{-1/2}$$
where $p$ is considered a real variable, ${G^0_{\n}}_{\,ij} := G^0_{\;ikjl} \, n_k\, n_l$,
is a real symmetric matrix. It can be diagonalized and its eigenvalues 
are non-negative. A non-zero eigenvalue $1/c^{(0)}_\n$ of $\B_\n$ represents 
the inverse wavefront speed for the corresponding mode and a plane wave front
orthogonal to the vector $\n$. 

Equation~\eqref{eq:6a} implies that the matrix-valued function $\K_\n(p)$ can be 
extended to $\mathbb{C}\setminus [0,\infty[$ by analytic continuation and
\begin{equation} \label{eq:6b}
\K_\n(-\ii \omega) = -\ii \omega \C_\n(\omega) + \A_\n(\omega)
\end{equation}
with
\begin{gather}
\C_\n(\omega) := \B + \int_{]0,\infty]} \frac{r \, \M_\n(r)}{r^2 + \omega^2} 
\mu(\dd r) \label{eq:frDisp}\\
\A_\n(\omega) := \omega^2 \int_{]0,\infty]} \frac{\M_\n(r)}{r^2 + \omega^2} 
\mu(\dd r) \label{eq:frAtt}
\end{gather}
Both matrix-valued functions are symmetric and positive semi-definite for $\omega \in \mathbb{R}$. 
Note that $\C_\n(\omega)$ is a non-increasing function of $\omega$ with respect to the 
usual order relation for matrices and its tends to the limit $\B$
for $\omega \rightarrow \infty$. This implies that the eigenvalues of the matrix-valued 
function $\C_\n(\omega)$ are non-increasing functions of $\omega$ bounded from below by the
eigenvalues of $\B$. 
The matrix-valued function $\A_\n(\omega)$ is non-decreasing and
it is bounded from above if $\mu$ has a finite mass. 

If $\vv(\omega)$ is an 
eigenvector of $\K_\n(-\ii \omega)^2$ corresponding to the eigenvalue $\lambda_j(\omega,\n)$ and
satisfying the normalization condition  $\vv^\dag\, \vv = 1$ then 
$a_j(\omega,\n) := \Re \lambda_j(\omega,\n) = \vv(\omega)^\dag\, \A_\n(\omega) \vv(\omega) \geq 0$ and 
$1/c_j(\omega,\n) := -\Im \lambda_j(\omega,\n)/\omega = \vv(\omega)^\dag\, \C_\n(\omega) \vv(\omega)  \geq 0$. 
Thus each root $\lambda_j$ of the dispersion equation has the form 
$\lambda_j(\omega,\n) = -\ii \omega/c_j(\omega,\n) + a_j(\omega,\n)$ and the plane wave is given by the formula
\begin{equation}
\u(t,\x) = \Re \left[ \e^{-\ii \omega t + \ii \omega \, \n\cdot\x /c_j(\omega,\n) - a_j(\omega,\n)\, \n\cdot \x}\, \aA \right]
\end{equation}
where the amplitude $\aA = \gamma \vv $, $\gamma \in \mathbb{C}$,  is a solution of 
equation~\eqref{eq:eigvtwo} for $\kappa = \lambda_j$.

\section{Plane waves in anisotropic viscoelastic media: an alternative approach.}
\label{sec:alter}

The function $\lambda_j(\cdot,\n)$ is not in general a CBF and therefore the functions $1/c_j(\cdot,\n)$ and $a_j(\cdot,\n)$ do not have integral representations analogous to \eqref{eq:frDisp} 
and \eqref{eq:frAtt}. We shall now try to work around this problem.

\begin{theorem} \label{thm:consteigv}
If $\vv$ is a constant eigenvector of a mvCBF $\Y(y)$ then the corresponding eigenvalue is
a CBF.
\end{theorem}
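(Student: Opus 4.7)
The plan is to reduce the matrix CBF condition to its scalar counterpart by taking the quadratic form of $\Y(y)$ against the fixed eigenvector $\vv$. After normalizing so that $\vv^\dag\, \vv = 1$, left multiplication of the eigenvalue equation $\Y(y)\, \vv = \lambda(y)\, \vv$ (which holds on $\mathbb{R}_+$ by hypothesis) by $\vv^\dag$ yields the explicit representation $\lambda(y) = \vv^\dag\, \Y(y)\, \vv$. Since $\Y$ admits an analytic continuation to $\mathbb{C}\setminus\,]-\infty,0]$ by Definition~\ref{def:mvCBF}, this same formula extends $\lambda$ analytically to the cut plane; and by analytic continuation of the identity $\Y(z)\, \vv = \lambda(z)\, \vv$ off of $\mathbb{R}_+$, $\vv$ remains an eigenvector of $\Y(z)$ with eigenvalue $\lambda(z)$ throughout the cut plane.

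The key step is then to verify the two defining properties of a (scalar) complete Bernstein function, i.e.\ the $d = 1$ instance of Definition~\ref{def:mvCBF}, which is equivalent to the integral-representation characterization used elsewhere in the paper. The matrix inequality $\Im z \, \Im \Y(z) \geq 0$ asserts positive semi-definiteness of the Hermitian matrix $\Im z\, \Im \Y(z)$, and this property is preserved by the quadratic form against the fixed vector $\vv$, giving
$$\Im z \, \Im \lambda(z) = \vv^\dag\, \left[\Im z \, \Im \Y(z)\right]\, \vv \geq 0.$$
Similarly $\lim_{x \to 0+} \lambda(x) = \vv^\dag\, \Y(0+)\, \vv$ exists and is real because $\Y(0+)$ exists and is a real matrix by hypothesis. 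Non-negativity of $\lambda$ on $\mathbb{R}_+$, needed for the CBF conclusion, follows from positive semi-definiteness of $\Y(x)$ on $\mathbb{R}_+$, which is in turn given by the representation \eqref{eq:5} of Theorem~\ref{thm:CBF}.

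I do not anticipate any serious obstacle, as the argument rests on the essentially trivial observation that positive semi-definiteness and realness are preserved under the map $\H \mapsto \vv^\dag\, \H\, \vv$ for a fixed vector $\vv$. The only points requiring some care are that the Hermitian conjugate $\vv^\dag$ must be used on the left (so that the sandwich of a Hermitian matrix is a real scalar), and that one invokes the scalar form of Definition~\ref{def:mvCBF} rather than its original matrix-valued form when identifying $\lambda$ as a CBF.
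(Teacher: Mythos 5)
Your proposal is correct. The central mechanism is the same as the paper's --- sandwich $\Y$ between $\vv^\dag$ and $\vv$ so that $\lambda(y)=\vv^\dag\,\Y(y)\,\vv$ --- but you certify the CBF property differently: you verify the Nevanlinna--Pick conditions of Definition~\ref{def:mvCBF} in its scalar instance (analyticity on the cut plane, $\Im z\,\Im\lambda(z)\geq 0$, real limit at $0+$, plus non-negativity on $\mathbb{R}_+$), whereas the paper first invokes the integral representation of Theorem~\ref{thm:CBF} and reads off the scalar Stieltjes representation $\lambda(y)=a+by+y\int_{]0,\infty[}(y+s)^{-1}m(s)\,\mu(\dd s)$ with $a,b,m\geq 0$. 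The two routes are equivalent by the standard characterization of scalar complete Bernstein functions, which your argument implicitly relies on; the paper's route has the practical advantage that it hands you the representation explicitly, which is exactly what is needed downstream to get the formulas \eqref{eq:frDisp}--\eqref{eq:frAtt} for $1/c(\omega,\n)$ and $a(\omega,\n)$ in the corollaries following Theorem~\ref{thm:consteigv}. One point you handled correctly and that deserves emphasis: Definition~\ref{def:mvCBF} only requires the limit at $0+$ to be \emph{real}, not positive semi-definite, so non-negativity of $\lambda$ on $\mathbb{R}_+$ does not come for free from the Pick conditions alone; your separate appeal to the representation \eqref{eq:5} (which gives $\Y(x)\geq 0$ for $x>0$) is genuinely needed to close that gap. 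The remark about analytically continuing the eigenvector equation itself is harmless but unnecessary, since only the identity $\lambda(z)=\vv^\dag\,\Y(z)\,\vv$ on the cut plane is used.
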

\begin{proof}
Theorem~\ref{thm:CBF} implies that 
$$\Y(y) = \A + y \B + y \int_{]0,\infty[} (y + s)^{-1} \mathbf{M}(s) \, \mu(\dd s)$$
where $\A$, $\B$ are positive semi-definite matrices, $\mu$ is a positive Radon measure on
$\mathbb{R}_+$ satisfying inequality \eqref{eq:6} and $\mathbf{M}$ is positive semi-definite
$\mu$-almost everywhere.
Let $\vv$ be a unit vector. Then the eigenvalue
$$\lambda(y) = \vv^\dag\, \Y(y) \vv = a + y\, b  
+ y \int_{]0,\infty[} (y + s)^{-1}  \, m(s)\, \mu(\dd s)$$ 
where $a := \vv^\dag\, \A \vv \geq 0$, $b := \vv^\dag\, \B \vv \geq 0$ 
and $m(s) := \vv^\dag \,\mathbf{M}(s) \,\vv \geq 0$ $\mu$-almost everywhere.
Hence $\lambda(y)$ is a CBF.
\end{proof}
\begin{corollary} \label{cor:prec} 
If $\K_\n(p)$ has a constant eigenvector $\aA$ and $\kappa(p,\n)$ is the corresponding 
eigenvalue then $\kappa(\cdot,\n)$ is a CBF and 
$\kappa(-\ii \omega,\n) = -\ii \omega/c(\omega,\n) + a(\omega,\n)$
where $1/c(\omega,\n), a(\omega,\n) \geq 0$.
\end{corollary}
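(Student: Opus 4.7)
The plan is to reduce the corollary directly to the machinery already assembled: Theorem~\ref{thm:K} (that $\K_\n(\cdot)$ is a mvCBF), Theorem~\ref{thm:consteigv} (constant eigenvectors of mvCBFs produce CBF eigenvalues), and the scalar Nevanlinna-type integral representation that underlies Theorem~\ref{thm:CBF}. The only residual task will be a direct evaluation at $p = -\ii\omega$ to read off the real and imaginary parts and confirm the non-negativity of the two coefficients.

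First, since $\K_\n(\cdot)$ is a mvCBF by Theorem~\ref{thm:K}, and $\aA$ is by hypothesis a $p$-independent eigenvector, Theorem~\ref{thm:consteigv} immediately gives that $\kappa(\cdot,\n)$ is a scalar CBF. Normalizing so that $\aA^\dag\,\aA = 1$, the scalar integral representation used in the proof of Theorem~\ref{thm:consteigv} yields
\begin{equation*}
\kappa(p,\n) = \alpha + \beta\, p + p \int_{]0,\infty[} (p + s)^{-1}\, m(s)\, \mu(\dd s),
\end{equation*}
with $\alpha,\beta \geq 0$ and $m(s) := \aA^\dag\,\M_\n(s)\,\aA \geq 0$ $\mu$-almost everywhere. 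Because $\K_\n(0) = 0$ (as already noted in the remarks preceding equation~\eqref{eq:6a}), setting $p = 0$ forces $\alpha = \aA^\dag\,\K_\n(0)\,\aA = 0$.

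Second, evaluate at $p = -\ii\omega$. Using $(s - \ii\omega)^{-1} = (s + \ii\omega)/(s^2+\omega^2)$ one obtains
\begin{equation*}
\kappa(-\ii\omega,\n) = \int_{]0,\infty[} \frac{\omega^2\, m(s)}{s^2 + \omega^2}\, \mu(\dd s) \;-\; \ii\omega\left(\beta + \int_{]0,\infty[} \frac{s\, m(s)}{s^2 + \omega^2}\, \mu(\dd s)\right),
\end{equation*}
so defining
\begin{equation*}
a(\omega,\n) := \int_{]0,\infty[} \frac{\omega^2\, m(s)}{s^2 + \omega^2}\, \mu(\dd s), \qquad \frac{1}{c(\omega,\n)} := \beta + \int_{]0,\infty[} \frac{s\, m(s)}{s^2 + \omega^2}\, \mu(\dd s),
\end{equation*}
yields $\kappa(-\ii\omega,\n) = -\ii\omega/c(\omega,\n) + a(\omega,\n)$ with both quantities manifestly non-negative, exactly as asserted. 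Alternatively, one can simply invoke equations~\eqref{eq:6b}--\eqref{eq:frAtt} together with $a(\omega,\n) = \aA^\dag\,\A_\n(\omega)\,\aA$ and $1/c(\omega,\n) = \aA^\dag\,\C_\n(\omega)\,\aA$.

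There is no real obstacle: the corollary is essentially an immediate specialization of Theorem~\ref{thm:consteigv}, combined with the scalar CBF representation and the already-established fact that $\K_\n(0)=0$. The only point worth emphasizing is that the hypothesis of a \emph{constant} (i.e.\ $p$-independent) eigenvector is indispensable; without it Theorem~\ref{thm:consteigv} is inapplicable and the eigenvalue branch need not be a CBF, which is precisely the obstruction highlighted at the start of Section~\ref{sec:alter}.
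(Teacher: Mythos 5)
Your proposal is correct and follows exactly the route the paper intends: apply Theorem~\ref{thm:consteigv} to the mvCBF $\K_\n$ (Theorem~\ref{thm:K}), drop the constant term via $\K_\n(0)=0$, and read off $a(\omega,\n)=\aA^\dag\,\A_\n(\omega)\,\aA$ and $1/c(\omega,\n)=\aA^\dag\,\C_\n(\omega)\,\aA$ from the representation \eqref{eq:6b}--\eqref{eq:frAtt}. Nothing further is needed.
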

It is clear that $1/c(\omega,\n)$ is finite, i.e. $c(\omega,\n) > 0$.
\begin{corollary}
Under the hypotheses of Corollary~\ref{cor:prec} 
the functions $1/c(\omega,\n)$ and $a(\omega,\n)$ satisfy equations~\eqref{eq:frDisp}
and \eqref{eq:frAtt} respectively. 
\end{corollary}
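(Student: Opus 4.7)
The plan is to exploit the constant eigenvector to reduce the matrix integral representation \eqref{eq:6a} of $\K_\n$ directly to scalar integral representations of $1/c(\omega,\n)$ and $a(\omega,\n)$ of the form \eqref{eq:frDisp} and \eqref{eq:frAtt}. Since $\aA$ is a constant eigenvector and $\kappa(p,\n)$ is the corresponding eigenvalue, we may take $\aA = \vv$ to be a unit vector, so that $\kappa(p,\n) = \vv^\dag\, \K_\n(p)\, \vv$ holds identically in $p$.

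First, I would insert the representation \eqref{eq:6a} into this contraction. Since $\M_\n(r)$ is $\mu$-almost everywhere bounded, the scalar function $m(r) := \vv^\dag\, \M_\n(r)\, \vv$ is $\mu$-almost everywhere bounded and non-negative, and contracting under the integral sign is justified by Fubini/dominated convergence in view of \eqref{eq:6}. Writing $b := \vv^\dag\, \B_\n\, \vv \geq 0$, this yields
\begin{equation*}
\kappa(p,\n) = p\, b + p \int_{]0,\infty[} (p + r)^{-1}\, m(r)\, \mu(\dd r),
\end{equation*}
which is exactly the canonical CBF representation of $\kappa(\cdot,\n)$ (with vanishing additive constant, consistent with $\K_\n(0) = 0$).

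Next, I would substitute $p = -\ii\omega$ and split into real and imaginary parts using
\begin{equation*}
\frac{-\ii\omega}{-\ii\omega + r} = \frac{\omega^2}{r^2 + \omega^2} - \ii\, \frac{\omega\, r}{r^2 + \omega^2}.
\end{equation*}
The real part of $\kappa(-\ii\omega,\n)$, which equals $a(\omega,\n)$ by Corollary~\ref{cor:prec}, then becomes
\begin{equation*}
a(\omega,\n) = \omega^2 \int_{]0,\infty[} \frac{m(r)}{r^2 + \omega^2}\, \mu(\dd r) = \vv^\dag\, \A_\n(\omega)\, \vv,
\end{equation*}
which is precisely the scalar form of \eqref{eq:frAtt}. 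The imaginary part equals $-\omega/c(\omega,\n)$, so dividing by $-\omega$ gives
\begin{equation*}
\frac{1}{c(\omega,\n)} = b + \int_{]0,\infty[} \frac{r\, m(r)}{r^2 + \omega^2}\, \mu(\dd r) = \vv^\dag\, \C_\n(\omega)\, \vv,
\end{equation*}
which is the scalar form of \eqref{eq:frDisp}.

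There is really no serious obstacle here: the whole argument is an application of the sandwiching identity $\kappa(p,\n) = \vv^\dag\, \K_\n(p)\, \vv$ together with the mvCBF representation already established in \eqref{eq:6a}. The only mildly delicate point worth stating explicitly is that the contraction commutes with the matrix-valued Radon integral, which follows at once from the uniform bound on $\M_\n$ together with \eqref{eq:6}; everything else reduces to separating real and imaginary parts of a Poisson-type kernel.
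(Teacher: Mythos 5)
Your proposal is correct and follows essentially the same route as the paper: the paper leaves this corollary without an explicit proof precisely because it is the scalar specialization, via the contraction $\kappa(p,\n)=\vv^\dag\,\K_\n(p)\,\vv$ established in Theorem~\ref{thm:consteigv}, of the very computation by which equations~\eqref{eq:frDisp} and \eqref{eq:frAtt} were derived from \eqref{eq:6a} in Section~\ref{sec:plwav}. Your kernel decomposition of $-\ii\omega/(-\ii\omega+r)$ and the identifications $a(\omega,\n)=\Re\kappa(-\ii\omega,\n)=\vv^\dag\A_\n(\omega)\vv$ and $1/c(\omega,\n)=-\Im\kappa(-\ii\omega,\n)/\omega=\vv^\dag\C_\n(\omega)\vv$ are exactly what is intended.
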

The function $c(\omega,\n)$ tends to $c^\infty(\n) := 1/\vv^\dag \, \B \vv$ as 
$\omega \rightarrow \infty$. 

The hypothesis of Theorem~\ref{thm:consteigv} is satisfied in for all the plane wave modes in isotropic viscoelastic media and for longitudinal waves when the wavefront normal $\n$ is parallel to
a specific direction. In the other cases the eigenvalue $\kappa(p,\n)$ is not a CBF. It is therefore 
more convenient to define plane waves in a different way:
\begin{equation} \label{eq:altplane}
\u(t,\x) = \Re \left[ \e^{-\ii \omega t - \K_\n(-\ii \omega) \, \n\cdot \x} \, \aA\right]
\end{equation}
where $\K_\n$ s defined by equation~\eqref{eq:Kndef} and $\aA \in \mathbb{C}^3$ is an {\em arbitrary} vector. It is easy to verify that \eqref{eq:altplane} satisfies equation~\eqref{eq:homoeqmotion}. 
Such plane-wave solutions appear in the analysis of Green's functions in \cite{HanAnisoVE}.

We now take advantage of the fact that $\K_\n(p)$ is a mvCBF and use equation~\eqref{eq:6b}
which yields the formula 
\begin{equation}
\u(t,\x) = \Re \left[\e^{-\ii \omega [t - \C_\n(\omega) \, \n\cdot\x] - \A_\n(\omega)\, \n\cdot \x}\, \aA \right]
\end{equation}
The matrix $\C_\n(\omega)$ plays the role of the inverse phase speed $\mathcal{D}(\omega,\n)$ 
while $\A_\n$ plays the role of the attenuation function $\mathcal{A}(\omega,\n)$.

We now prove that $\u(t,\x)$ is exponentially attenuated in the direction of its propagation.
Define the matrix norm
\begin{equation}
\| \A \|_d := \sup_{\w \in \mathbb{C}^d} \sqrt{\frac{\w^\dag\, \A^\dag\, \A \w}{\w^\dag \w}}
\end{equation} 
The norm $\| \A \|_d$ is unitarily invariant, i. e. for every unitary matrix $\mathbf{U}$
and $\mathbf{V}$ 
$$ \| \mathbf{U} \, \A\, \mathbf{V}\|_d = \| \A \|_d.$$
We can now show that the function 
\begin{equation} \label{eq:Zz} 
\e^{-y \K_\n(\omega)} \equiv \e^{\ii\,\omega\, y \,\mathbf{C}_\n(\omega) - y\, \A_\n(\omega)}
\end{equation}
is exponentially attenuated if $\A_\n(\omega) > 0$ for $\omega \in \mathbb{R}_+$.

According to Theorem~IX.3.11 in \cite{Bhatia97} every complex $d \times d$ matrix $\A$
satisfies the inequality 
$$\| \e^{\A} \|_d \leq \| \e^{\Re \A}\|_d$$
Hence 
$$\| \e^{-y \, \K_\n(-\ii \omega)} \|_3 \leq \| \e^{-y \, \A_\n(\omega)}\|_3$$
Let $a_0(\omega)$ be the smallest eigenvalue of the real symmetric matrix $\A_\n(\omega)$.
If $a_0$ is the smallest eigenvalue of a real symmetric
$d \times d$ matrix $\A$ with eigenvalues $a_j$ and unitary eigenvectors $\vv_j$, then 
$$\| \e^{-\A}\|_d = \sup_{\{c_j\mid j=1,\ldots d\}} \sqrt{\frac{\sum_{j=1}^d \vert c_j\vert^2 \, \exp(-2 a_j)}{\sum_{j=1}^d \vert c_j\vert^2}} = \e^{-a_0}$$
Consequently
\begin{equation} \label{eq:attn}
\| \e^{-y \, \K_\n(-\ii \omega)} \|_3 \leq \e^{-a_0(\omega) y}
\end{equation}
If $\A_\n(\omega) > 0$ for $\omega \geq 0$ then equation~\eqref{eq:attn} implies that
the plane wave with the wavefront normal $\n$ is exponentially attenuated with
distance. 

For sufficiently small anisotropic dispersion $\C_\n(\omega) - w(\omega) \, \I_3 = \OO[\varepsilon]$ 
and attenuation $\A_\n(\omega) = z(\omega) \OO[\varepsilon]$, where $w(\omega)$ and $z(\omega)$ are some real functions, the inverse phase function and the attenuation function in \eqref{eq:Zz} can be approximately disentangled in the form 
\begin{multline} \label{eq:OUN}
\exp(\ii \omega \C_\n(\omega)\, y - \A_\n(\omega) \, y) = \\
\Phi(y, \omega) \,\exp(\ii \omega\, y \C_\n(\omega)) \, \exp(-y\,\A_\n(\omega))
= \\ \Psi(y, \omega)\, \exp(-y \,\A_\n(\omega)) \, \exp(\ii \omega \, y \C_\n(\omega))
\end{multline}
where the functions $\Phi(y, \omega), \Psi(y, \omega) = 1 + \OO\left[y^2\right]$ can be expressed in terms of exponential functions of nested commutators of $\C_\n(\omega)$
and $\A_\n(\omega)$ by the use of the Zassenhaus formula \cite{Magnus54}.

The infinite product is convergent if $\vert y \vert \,
\| \omega\, \C_\n(\omega) + \A_\n(\omega)\| \leq 0.596705$. 
The matrix $\A_\n(\omega)$ is symmetric 
hence it admits a spectral representation with eigenvalues $a_i(\omega),\n$ and eigenvectors $\w_i(\omega,\n)$. Logarithmic attenuation rates 
can be made explicit by applying the second line of equation~\eqref{eq:OUN} and expanding the amplitude vector in terms of the eigenvectors $\w_i$, 
 $\aA = \sum_{i=1}^3 v_i\, \w_i(\omega,\n)$,
 so that 
$$\exp(y [\ii \omega \C_\n(\omega) - \A_\n(\omega)]) \, \aA = \Phi(y, \omega)\, \exp(\ii \omega r 
\C_\n(\omega)) \sum_{i=1}^3 \exp(-y \, a_{\n,i}(\omega)) \, v_i \, \w_i(\omega,\n)$$

A different expansion can be obtained using the last line of \eqref{eq:OUN}. Expand 
the amplitude vector $\aA$ in terms of three independent eigenvectors $\w^1_j(\omega)$  of the matrix
$\C_\n(\omega)$ corresponding to the eigenvalues $1/c_j(\omega)$. This yields an
expansion of the plane wave in terms of quasi-elastic modes \index{quasi-elastic modes} 
$$\exp(y [\ii \omega \C_\n(\omega) - \A_\n(\omega)]) \, \aA = \Psi(y, \omega)\, \exp( 
-y\, \A_\n(\omega)) \sum_{j=1}^3 \exp( \ii \omega \, y/c_{\n,j}(\omega)) \, v^1_j \, \w^1_j(\omega)$$
The quasi-elastic modes appearing under the sum are coupled by the attenuation operator 
$\exp(-\A_\n(\omega)\, y)$ and by the factor $\Psi(y,\omega)$. 

The functions $\Phi$ and $\Psi$ are rather difficult to evaluate numerically, but see \cite{CasasAl} concerning the numerical implementation of the Zassenhaus formula. 
 
\section{Energy flux density of an inhomogeneous plane wave.}
\label{sec:flux}

Equation~\eqref{eq:homoeqmotion} implies the energy conservation equation
$$\frac{\dd}{\dd t} \left[\u_{,t}^2 + W\right] + \Div \Psi = 0$$
where the energy flux density is given by the formula 
\begin{equation} \label{eq:flux2}
\Psi_l = -\sigma_{kl}\, \dot{u}_k  =
-\dot{u}_k \, G_{klmn}\ast \dot{u}_{m,n}, \qquad l=1,2,3
\end{equation}
and $\dd W/\dd t = \sigma_{kl}\, \dd \u_{k,l}/\dd t$. The energy density functional 
$W$ for LICM relaxation was constructed in \cite{HanEnergy} and for CPD relaxation in 
\cite{HanHamiltonianVE}.

We shall now calculate the energy flux density $\mathbf{\Psi}$ of an inhomogeneous 
plane wave 
$\u = \Re\left[\aA \, \exp(-\ii \omega t + 
\ii \, \kk \cdot \x)\right]$, with a complex amplitude vector 
$\aA = \aA^\mathrm{R} + \ii \aA^\mathrm{I}$ and a complex wave number
$\kk = \kk^\mathrm{R} + \ii \kk^\mathrm{I}$, and then average the flux over 
the period 
$T = 2 \uppi/\omega$. In contrast to the previous sections we do not assume here that
the vectors $\kk^{\mathrm{R}}$ and  $\kk^{\mathrm{I}}$ are collinear. This concept is a generalization 
of the plane waves discussed in Section~\ref{sec:plwav}.

The time average is denoted by
$$\langle \Psi_l \rangle := \frac{1}{T} \int_0^T \Psi_l(t,x) \, \dd t$$
Let $C = \cos\left(\omega t - \kk^\mathrm{R}\cdot\x\right)$,
$S = \sin\left(\omega t - \kk^\mathrm{R}\cdot\x\right)$.
The particle velocity and the stress are  given by the expressions
$$\dot{\u} = \omega \,
\e^{-\kk^\mathrm{I}\cdot\x}\, \left[\aA^\mathrm{I}\,C - \aA^\mathrm{R}\, 
S\right]$$
and
\begin{multline}
\sigma_{kl} = G_{klmn}\ast\dot{u}_{m,n}  = \\
\omega \, G_{klmn}\ast \left[\Re(a_m \, k_n)\, C + \im(a_m \, k_n)\,S
   \right]\, \e^{-\kk^\mathrm{I}\cdot\x}= \\
\omega \, G_{klmn}\ast \, \Re\left[ a_m \, k_n \, \e^{-\ii \omega t 
+ \ii \kk^\mathrm{R}\cdot\x} \right] \, \e^{-\kk^\mathrm{I}\cdot\x}  
\end{multline}
The identity
$$\int_0^\infty \tens{G}(s) \, \e^{-\ii \omega (t-s)} \, \dd s 
= \e^{-\ii \omega t}\, \tilde{\tens{G}}(-\ii \omega)$$
implies that
$$\sigma_{kl} = \omega\, \e^{-\kk^\mathrm{I}\cdot\x}\,\re\left[a_m \,
k_n 
\tilde{G}_{klmn}(-\ii \omega)  \, \e^{-\ii \omega t + 
\ii \kk^\mathrm{R}\cdot\x}   \right]$$
Hence, noting that $\langle C^2 \rangle = \langle S^2 \rangle = 1/2$ 
and $\langle S C \rangle = 0$, 
\begin{multline}
-\langle \Psi_l \rangle = \langle \dot{u}_k \, \sigma_{kl}\rangle 
= \omega^2 \, \e^{-2 \kk^\mathrm{I}\cdot\x}\,\\ \times \left\langle 
\left[-a^\mathrm{R}\, S + a^\mathrm{I}\, C\right] \,
\left[\Re \left[ a_m \, k_n \, \tilde{G}_{klmn}(-\ii\omega)
 \right]\, C + \Im \left[ a_m \, k_n \, \tilde{G}_{klmn}(-\ii\omega)
 \right]\, S  \right] \right\rangle
\end{multline}
and 
\begin{equation} \label{eq:flux3}
-\langle \Psi_l \rangle  = -\frac{\omega^2}{2} \e^{-2 \kk^\mathrm{I}\cdot\x}\, 
\im\left[ \overline{a_k}\, \tilde{G}_{klmn}(-\ii\omega)  
a_m \, k_n\right], \qquad l=1,2,3
\end{equation}
For comparison with elastic media it is more convenient to express the energy flux density
in terms of $\tens{Q}(p) = p \, \tilde{\tens{G}}(p)$:
\begin{equation} \label{eq:flux4}
-\langle \Psi_l \rangle  = \frac{\omega}{2} \e^{-2 \kk^\mathrm{I}\cdot\x}\, 
\re\left[ \overline{a_k}\, \tilde{Q}_{klmn}(-\ii\omega)  
a_m \, k_n\right], \qquad l=1,2,3
\end{equation}
For an elastic medium $\tilde{Q}$ is constant (the stiffness tensor).

\section{CPD functions and the direction of the energy flux density.}
\label{sec:CPD}

The energy flux density decays exponentially in the direction of 
the attenuation vector. We shall now show that the energy flux density decays 
exponentially in the direction of the flux. To this effect we shall use a much weaker 
assumption about the relaxation modulus than previously.

\begin{definition} \label{def:CPD}
A locally integrable function $f: \mathbb{R}_+ \rightarrow \mathbb{C}$ is 
said to be {\em causal positive definite} (CPD) if 
\begin{equation} \label{eq:PT+}
\Re \int_0^\infty f(t) \, (\phi\ast\check{\phi})(t) \dd t 
\equiv \re \int_{-\infty}^\infty \phi(t) \int_0^\infty f(s) \,
\overline{\phi(t-s)} \, \dd s \, \dd t \geq 0
\end{equation} 
for every square-integrable function $\phi$ with compact support.
\end{definition}
We have used here the notation $\check{f}(t) := \overline{f(-t)}$.
In \cite{GripenbergLondenStaffans} CPD functions are called functions of positive type.

This definition is readily generalized to matrix- and operator-valued functions. In 
particular we are interested in tensor-valued functions. In this context tensors are defined as 
operators on the space $S$ of symmetric $d\times d$ complex matrices. The space $S$ is 
endowed with the scalar product $\langle \ee, \f\rangle := \sum_{k,l} e_{kl}\, f_{kl}$.
The space of symmetric operators on $S$ will be denoted by $C$. 
\begin{definition} \label{def:tCPD}
A locally integrable function tensor-valued function $\tens{G}: \mathbb{R}_+ \rightarrow C$ is 
said to be {\em causal positive definite} (CPD) if 
\begin{equation} \label{eq:PT+}
\Re \int_0^\infty G_{ijkl}(t) (e_{ij}\ast\check{e}_{kl})(t) \dd t 
\equiv \Re \int_{-\infty}^\infty \left[ \int_0^\infty \langle \ee(t),  \tens{G}(s) \,
\overline{\ee(t-s)} \rangle\, \dd s \right] \, \dd t \geq 0
\end{equation} 
for every $S$-valued square-integrable function $\ee$ on $\mathbb{R}$ with compact support.
\end{definition}

We shall need spectral characterizations of CPD functions.
\begin{theorem} \label{thm:BochnerGeneralized1}
$\tens{G}$ is a tensor-valued CPD function on $\mathbb{R}_+$ if the function
$$\tens{F}(t) = \begin{cases} \tens{G}(t) & t > 0 \\
0 & t \leq 0 \end{cases} 
$$
is a tempered distribution and the real part of its Fourier transform 
$\tens{N}(\xi) := \Re \hat{\tens{F}}(\xi)$ is positive semi-definite for $\omega \in \mathbb{R}$. 
\end{theorem}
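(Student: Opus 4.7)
The goal is to verify the defining inequality of Definition~\ref{def:tCPD} by transferring the test into the frequency domain via Parseval's identity, after which the PSD hypothesis on $\tens{N}(\xi)$ finishes things immediately. Fix an $S$-valued, compactly supported $\ee \in L^2(\mathbb{R})$, and denote by $J$ the real quantity that must be shown non-negative.

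First I would use that $\tens{F}$ vanishes on $(-\infty, 0]$ to extend the inner integral to all of $\mathbb{R}$ and then apply Fubini to get
$$J = \Re \int_{\mathbb{R}} F_{ijkl}(s)\, C_{ij,kl}(s)\, \dd s, \qquad C_{ij,kl}(s) := \int_{\mathbb{R}} e_{ij}(t)\, \overline{e_{kl}(t-s)}\, \dd t.$$
A direct computation, using $\widehat{\bar f}(\xi) = \overline{\hat f(-\xi)}$, yields $\hat C_{ij,kl}(\xi) = \hat e_{ij}(\xi)\, \overline{\hat e_{kl}(\xi)}$; note that $C_{ij,kl}$ is continuous and compactly supported, so the autocorrelation can be approximated by Schwartz tensors.

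Next I would invoke the bilinear Parseval identity $\int f g\, \dd s = (2\pi)^{-1}\int \hat f(\xi)\hat g(-\xi)\, \dd\xi$ to obtain
$$\int F_{ijkl}(s)\, C_{ij,kl}(s)\, \dd s = \frac{1}{2\pi} \int \hat F_{ijkl}(-\xi)\, \hat e_{ij}(\xi)\, \overline{\hat e_{kl}(\xi)}\, \dd\xi.$$
Taking the real part and using the symmetry $\hat F_{ijkl} = \hat F_{klij}$ inherited from the fact that $\tens{G}$ is a symmetric operator on $S$, one checks that the complex conjugate of the integrand is obtained by relabelling $(ij) \leftrightarrow (kl)$; this permits replacing $\hat{\tens{F}}(-\xi)$ by $\tens{N}(-\xi)$, giving
$$J = \frac{1}{2\pi}\int N_{ijkl}(-\xi)\, \hat e_{ij}(\xi)\, \overline{\hat e_{kl}(\xi)}\, \dd\xi.$$
The hypothesis now closes the argument: for each $\xi$, $\tens{N}(-\xi)$ is a real symmetric PSD operator on $S$, and decomposing $u = x + \ii y \in S \otimes \mathbb{C}$ gives $u_{ij} N_{ijkl}\, \overline{u_{kl}} = \langle x, \tens{N} x\rangle + \langle y, \tens{N} y\rangle \ge 0$ (the cross terms cancelling by the symmetry of $\tens{N}$).

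The main obstacle is the bookkeeping between the complex-bilinear pairing $\langle \cdot, \cdot\rangle$ on $S$ (with no conjugation) and the Hermitian structure forced by the $\overline{\ee(t-s)}$ in the CPD definition; the symmetry $F_{ijkl} = F_{klij}$ is precisely what allows the symmetrization converting $\hat{\tens{F}}$ to $\tens{N}$. A secondary technical point is that $\tens{F}$ is assumed only to be a tempered distribution, so the Parseval step must be read as the distributional pairing $\langle \hat{\tens{F}}, \cdot\rangle$ applied to a Schwartz test tensor built from $\hat{\ee} \otimes \overline{\hat{\ee}}$; this is legitimate because, after a routine mollification of $\ee$, the autocorrelation $\tens{C}$ lies in $C_c^\infty$, and the final inequality extends to $\ee \in L^2_c$ by density.
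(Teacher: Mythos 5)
The paper does not actually prove this theorem: it simply cites Gripenberg--Londen--Staffans, Theorem~16.2.5. Your argument is, in essence, the standard proof of that cited result, and it is correct in its main lines: extending the inner integral using $\tens{F}\equiv 0$ on $]-\infty,0]$, recognizing the autocorrelation $C_{ij,kl}=e_{ij}\ast\check{e}_{kl}$ with $\hat{C}_{ij,kl}(\xi)=\hat{e}_{ij}(\xi)\overline{\hat{e}_{kl}(\xi)}$, passing to the frequency domain by Parseval, and using the index symmetry $F_{ijkl}=F_{klij}$ (which the paper records explicitly just after the theorem) to replace $\hat{\tens{F}}$ by its real/Hermitian part $\tens{N}$ under the real-part sign. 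Your handling of the bilinear-versus-Hermitian pairing and of the complex test element $u=x+\ii y$ is exactly the bookkeeping the statement requires. Two caveats are worth making explicit. First, your final step treats $\tens{N}(-\xi)$ as a pointwise-defined PSD operator; the hypothesis as the paper intends it (see the discussion immediately following the theorem) is distributional, so that $\tens{N}$ is in general only a positive operator-valued Radon measure, and the concluding integral should be read as $\int \hat{e}_{ij}(\xi)\overline{\hat{e}_{kl}(\xi)}\,N_{ijkl}(-\dd\xi)\geq 0$, justified by approximating $\hat{\ee}$ by simple functions rather than by a pointwise evaluation; this is a routine repair and does not change the structure of the argument. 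Second, your mollification/density step is the right fix for the fact that $C$ is merely continuous with compact support rather than Schwartz; since $\tens{G}$ is locally integrable and the mollified autocorrelations converge uniformly with supports in a fixed compact set, the limit passage is legitimate. With these points spelled out, your proof is a self-contained substitute for the external citation, which is arguably a gain over the paper's presentation.
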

\noindent \cite{GripenbergLondenStaffans}, Theorem~16.2.5.

Since $\langle \ee, \tens{N}(\xi) \,\ee \rangle$ is non-negative for every $\ee \in S$ (in the distributions sense), it is a positive Radon measure. The polarization argument (see the proof of Theorem~\ref{thm:CBF})
leads to the conclusion that $\tens{N}$ is a tensor-valued Radon measure.

Let $\varphi_n(\xi), n=1,2\ldots$,  be a sequence of real Schwartz test functions tending in the 
$\mathcal{C}^0$ norm to the characteristic function $\chi_E$ of a Borel set $E \subset \mathbb{R}$. 
Since 
\begin{multline*}
\int \varphi_n(\xi) \, \tens{N}(\xi) \, \dd \xi = \frac{1}{2} \int \left[ \hat{\tens{F}}(\xi) +
 \hat{\tens{F}}(\xi)^\dag \right] \, \varphi_n(\xi) \, \dd \xi = \\
\frac{1}{2} \int \hat{\tens{F}}(\xi) \, \varphi_n(\xi) \, \dd \xi + \frac{1}{2} \left[ \int \hat{\tens{F}}(\xi) \, \varphi_n(\xi)\, \dd \xi \right]^\dag
\end{multline*}
 is Hermitian, so is is limit 
$$\tens{N}(E) = \lim_{n\rightarrow \infty} \int \varphi_n(\xi)\, \tens{N}(\dd \xi)$$
for every Borel $E \subset \mathbb{R}$. 

Inverting the Fourier transform and noting that $\tens{F}(t) = 0$ for $t < 0$ 
while $\tens{F}(t)$ is real and $\tens{F}(t) = \tens{F}(t)^\top$ ($F_{ijkl}(t) = F_{klij}(t)$), we have
\begin{multline} \label{eq:BochnerGen}
\tens{G}(t) = \tens{F}(t) + \overline{\tens{F}(-t)} = \frac{1}{2 \uppi} \int_{-\infty}^\infty \e^{-\ii \xi t} \, \left[ \tens{H}(\dd \xi) + \tens{H}(\dd \xi)^\dag\right] = \frac{1}{\uppi} \int_{-\infty}^\infty \e^{-\ii \xi t} \, \tens{N}(\dd \xi)  
\end{multline}
for $t > 0$. 

Using a version of inequality \eqref{eq:RN} one simplify the above results:
\begin{theorem} \label{thm:scalmeas}
If the Radon measure $n$ is the trace of $\tens{N}$, 
$\int \phi(\xi)\, n(\dd \xi) := \sum_{k,l=1}^3 \int \phi(\xi) N_{klkl}(\dd \xi)$ for 
every $\phi \in \mathcal{C}^0_\mathrm{c}$, then
$n \geq 0$ and and there is an essentially bounded $C$-valued function 
$\tens{L}: \mathbb{R}_+  \rightarrow C$ such that  $\tens{N}(\dd \xi) =
\tens{L}(\xi) \, n(\dd \xi)$, $\vert \tens{L}(\xi) \vert 
\leq 1$ for $n$-almost all $\xi \in \mathbb{R}$.
\end{theorem}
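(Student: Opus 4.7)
The plan is to imitate the Radon--Nikodym argument already used in the proof of Theorem~\ref{thm:CBF}, now carried out on the space $S$ of symmetric matrices equipped with the scalar product $\langle\ee,\f\rangle = \sum_{k,l} e_{kl}\, f_{kl}$ and with the tensor-valued Radon measure $\tens{N}$ in place of the matrix-valued $\N$ that appeared there.

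First I would verify that $n$ is a positive Radon measure. Each diagonal component $N_{klkl}(\cdot) = \langle e_{kl}, \tens{N}(\cdot)\, e_{kl}\rangle$ is a positive Radon measure because $\tens{N}(I)$ is positive semi-definite on $S$ for every Borel set $I$, so the finite sum $n = \sum_{k,l} N_{klkl}$ is positive. Next I would establish the bilinear polarization bound: for $\ee,\f \in S$ and any Borel $I$, applying $(\ee\pm\f)^{\dagger}\tens{N}(I)(\ee\pm\f) \geq 0$ gives
$$\bigl|\langle \ee, \tens{N}(I)\,\f\rangle\bigr| \leq \tfrac{1}{2}\bigl[\langle \ee,\tens{N}(I)\ee\rangle + \langle \f,\tens{N}(I)\f\rangle\bigr],$$
and since $\tens{N}(I)$ is a bounded positive semi-definite operator on the finite-dimensional space $S$ its operator norm is dominated by its trace, yielding
$$\bigl|\langle \ee, \tens{N}(I)\,\f\rangle\bigr| \leq \tfrac{1}{2}\bigl(\|\ee\|^2+\|\f\|^2\bigr)\, n(I).$$
This is the analogue of inequality \eqref{eq:RN}.

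The inequality shows that for every pair $(\ee,\f)$ the complex Radon measure $I \mapsto \langle \ee,\tens{N}(I)\,\f\rangle$ is absolutely continuous with respect to $n$. The Radon--Nikodym theorem then supplies an $n$-essentially bounded density $\ell_{\ee,\f}(\xi)$ satisfying $\langle \ee,\tens{N}(\dd\xi)\,\f\rangle = \ell_{\ee,\f}(\xi)\, n(\dd\xi)$. Evaluating on any fixed orthonormal basis $\{\phi_i\}$ of $S$ and setting $L_{ij}(\xi) := \ell_{\phi_i,\phi_j}(\xi)$ defines a measurable $C$-valued function $\tens{L}$ with $\tens{N}(\dd\xi) = \tens{L}(\xi)\, n(\dd\xi)$. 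Symmetry of $\tens{L}(\xi)$ as an element of $C$ is inherited from the symmetry of $\tens{N}(I)$ on $S$.

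It remains to obtain the uniform bound $|\tens{L}(\xi)| \leq 1$ $n$-a.e., and this is the one delicate point, because the natural bound from Radon--Nikodym is only a separate almost-everywhere statement for each test pair. I would handle it by fixing a countable dense subset $D \subset S$ and applying the bilinear bound to every $(\ee,\f) \in D\times D$: outside a single $n$-null set one has
$$\bigl|\langle \ee, \tens{L}(\xi)\,\f\rangle\bigr| \leq \tfrac{1}{2}\bigl(\|\ee\|^2+\|\f\|^2\bigr)\quad \text{for all }\ee,\f \in D.$$
By continuity of $(\ee,\f) \mapsto \langle \ee, \tens{L}(\xi)\,\f\rangle$ on the finite-dimensional space $S$ the inequality extends to all $\ee,\f \in S$, and replacing $\ee,\f$ by $\lambda\ee,\lambda^{-1}\f$ and optimizing in $\lambda>0$ yields $|\langle \ee, \tens{L}(\xi)\,\f\rangle| \leq \|\ee\|\,\|\f\|$. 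Hence $|\tens{L}(\xi)| \leq 1$ for $n$-almost every $\xi$, which completes the argument.
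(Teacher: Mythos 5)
Your argument is the one the paper intends: the paper gives no explicit proof of this theorem, saying only that it follows from ``a version of inequality \eqref{eq:RN}'', i.e.\ from the same polarization--trace--Radon--Nikodym scheme used in the proof of Theorem~\ref{thm:CBF}, and that is exactly what you carry out, with the added (and welcome) care of assembling the separate almost-everywhere statements into a single null set. One step does not quite deliver the stated conclusion, however. Optimizing over $\lambda$ gives $\vert\langle\ee,\tens{L}(\xi)\,\f\rangle\vert\leq\Vert\ee\Vert\,\Vert\f\Vert$, which is an \emph{operator-norm} bound, whereas the theorem's $\vert\tens{T}\vert$ is the Frobenius norm $\bigl(\sum_{ijkl}T_{ijkl}^{\;2}\bigr)^{1/2}$; an operator norm at most $1$ does not imply a Frobenius norm at most $1$ (the identity on $S$ has operator norm $1$ and Frobenius norm $\sqrt{\dim S}$). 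The repair is immediate from what you have already established: $\tens{L}(\xi)$ is positive semi-definite $n$-a.e.\ (apply your single-null-set argument to the diagonal measures $\langle\ee,\tens{N}(\cdot)\,\ee\rangle\geq 0$), and $\mathrm{trace}\,\tens{L}(\xi)=1$ $n$-a.e.\ because $n$ is by definition the trace of $\tens{N}$, so that $n(\dd\xi)=\mathrm{trace}[\tens{L}(\xi)]\,n(\dd\xi)$; for a positive semi-definite operator the Frobenius norm is dominated by the trace, since $\sum_i\lambda_i^{\,2}\leq\bigl(\sum_i\lambda_i\bigr)^2$ for $\lambda_i\geq 0$. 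With that substitution your proof is complete and coincides with the argument the paper is implicitly invoking.
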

\noindent where $\vert \tens{T} \vert^2 := \sum_{ijkl} T_{ijkl}^{\;2}$ for every $\tens{T} \in C$.
We are now ready to state the main result of this section:
\begin{theorem}
Let $\tens{G}$ be a tensor-valued CPD function.

The angle between the time-averaged energy flux density $\mathbf{\Psi}$ and the 
attenuation vector $\kk^{\mathrm{I}}$ is acute.
\end{theorem}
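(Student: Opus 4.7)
The plan is to pass through the local mechanical energy balance and reduce the claim to the non-negativity of the time-averaged stress power, which then follows from the CPD hypothesis via the spectral characterization already established.

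First, from Newton's equation $\rho\,\ddot u_k = \sigma_{kl,l}$ and the definition $\Psi_l = -\sigma_{kl}\,\dot u_k$ I obtain the local identity
$$\frac{\partial}{\partial t}\!\left(\tfrac{1}{2}\rho\,\dot u^2\right) + \Div \mathbf{\Psi} \;=\; -\sigma_{kl}\,\dot u_{k,l}.$$
The plane-wave kinetic energy is strictly $T$-periodic in $t$ with $T=2\uppi/\omega$, so averaging over one period annihilates the first term and yields $\Div\langle\mathbf{\Psi}\rangle = -\langle\sigma_{kl}\,\dot u_{k,l}\rangle$. By equation~\eqref{eq:flux4}, $\langle\mathbf{\Psi}\rangle(\x) = \mathbf{P}\,\e^{-2\kk^{\mathrm{I}}\cdot\x}$ with $\mathbf{P}$ independent of $\x$, so $\Div\langle\mathbf{\Psi}\rangle = -2\,\kk^{\mathrm{I}}\!\cdot\langle\mathbf{\Psi}\rangle$, and the balance collapses to the key identity
$$\kk^{\mathrm{I}}\!\cdot\langle\mathbf{\Psi}\rangle \;=\; \tfrac{1}{2}\,\langle\sigma_{kl}\,\dot u_{k,l}\rangle.$$

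It remains to show the right-hand side is non-negative. Repeating the computation of Section~\ref{sec:flux} (using $\int_0^\infty \tens G(s)\,\e^{-\ii\omega(t-s)}\,\dd s = \e^{-\ii\omega t}\,\tilde{\tens G}(-\ii\omega)$ and $\langle\re[A\e^{-\ii\omega t}]\,\re[B\e^{-\ii\omega t}]\rangle = \tfrac{1}{2}\re(A\overline{B})$) gives
$$\langle\sigma_{kl}\,\dot u_{k,l}\rangle \;=\; \tfrac{\omega^2}{2}\,\e^{-2\kk^{\mathrm{I}}\cdot\x}\,\re\!\left[\,\overline{\eta_{kl}}\,\tilde G_{klmn}(-\ii\omega)\,\eta_{mn}\,\right],$$
where $\eta_{mn}:=\tfrac{1}{2}(a_m k_n + a_n k_m)\in S$ is the symmetrized complex amplitude, the symmetrization being legitimate by the minor symmetries of $\tens G$. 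The major symmetry $\tilde G_{klmn}=\tilde G_{mnkl}$ rewrites this real part as the Hermitian form $\overline{\eta_{kl}}\,[\re\tilde G_{klmn}(-\ii\omega)]\,\eta_{mn}$, and Theorem~\ref{thm:BochnerGeneralized1} together with Theorem~\ref{thm:scalmeas} asserts that $\re\tilde{\tens G}(-\ii\omega)$ is positive semi-definite as an operator on $S$. A polarization argument of the kind used in the proof of Theorem~\ref{thm:CBF} extends positive semi-definiteness from real to complex $\eta$, so the form is $\geq 0$, whence $\kk^{\mathrm{I}}\!\cdot\langle\mathbf{\Psi}\rangle \geq 0$.

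The main obstacle is that the CPD condition of Definition~\ref{def:tCPD} involves compactly supported test functions whereas the strain rate of a plane wave is not compactly supported; Theorem~\ref{thm:BochnerGeneralized1} is what bridges this gap by trading the integral condition for the pointwise positive semi-definiteness of $\re\tilde{\tens G}(-\ii\omega)$. A secondary technicality is the careful accounting for tensorial symmetries: the form $\overline{a_k}\,\tilde G_{klmn}\,a_m\,k_n\,k_l^{\mathrm{I}}$ is not manifestly a Hermitian quadratic form on $S$, and one must use the minor symmetries to symmetrize the factor $a_m k_n$ and the major symmetry to extract the positive semi-definite real part. Strict acuteness (as opposed to merely non-obtuse) follows whenever the non-degeneracy $\overline\eta\cdot\re\tilde{\tens G}(-\ii\omega)\,\eta>0$ holds for the mode at hand, which is the generic situation for a genuinely dissipative plane wave.
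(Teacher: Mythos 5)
Your proof is correct, and it reaches the paper's conclusion by an equivalent but differently packaged route. The paper argues purely algebraically: from $\Re\hat{\tens{G}}(\omega)\geq 0$ it gets $\Re\left[\overline{a_k}\,\overline{k_l}\,\tilde{G}_{klmn}(-\ii\omega)\,a_m k_n\right]\geq 0$ (this is, up to the factor $\tfrac{\omega^2}{2}\e^{-2\kk^{\mathrm{I}}\cdot\x}$, exactly your non-negative averaged stress power), then uses the dispersion relation $\tilde{G}_{klmn}(-\ii\omega)k_lk_na_m=-\ii\omega\rho\,a_k$ to show that $\overline{a_k}\,k_l\,\tilde{G}_{klmn}(-\ii\omega)\,a_mk_n$ is purely imaginary, and subtracts the resulting identity to isolate $2\Im[\cdots]\,k^{\mathrm{I}}_l\geq 0$, which is $\langle\mathbf{\Psi}\rangle\cdot\kk^{\mathrm{I}}$ by \eqref{eq:flux3}. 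Your replacement for that subtraction --- time-averaging the local energy balance and using $\Div\langle\mathbf{\Psi}\rangle=-2\,\kk^{\mathrm{I}}\cdot\langle\mathbf{\Psi}\rangle$ --- is precisely the equation of motion contracted with $\dot{u}_k$ and then averaged, so the two derivations carry identical mathematical content; yours buys a transparent physical interpretation ($\kk^{\mathrm{I}}\cdot\langle\mathbf{\Psi}\rangle$ equals half the mean dissipation rate) at the cost of invoking the energy balance, which the paper's proof never needs explicitly. Both arguments hinge on the same key fact, $\Re\hat{\tens{G}}(\omega)\geq 0$ for a CPD modulus; be aware that Theorem~\ref{thm:BochnerGeneralized1} as stated in the paper gives only the \emph{sufficiency} of this spectral condition for the CPD property, so you (and the paper) are implicitly using the converse implication, which does hold by Theorem~16.2.5 of \cite{GripenbergLondenStaffans}. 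Your remarks on symmetrizing $a_mk_n$ via the minor symmetries and on strict versus non-strict acuteness are sound; note that the paper, like you, in fact establishes only $\langle\mathbf{\Psi}\rangle\cdot\kk^{\mathrm{I}}\geq 0$.
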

\begin{proof}
The Fourier transform $\hat{\tens{G}}(\omega) \equiv \tilde{\tens{G}}(-\ii \omega)$ 
of a CPD function $\tens{G}$ satisfies the inequality $\Re \hat{\tens{G}}(\omega) \geq 0$,
hence 
\begin{multline} \label{eq:aa}
\Re\left[ \overline{a_k} \,  \overline{k_l} \, \tilde{G}_{klmn}(-\ii \omega) 
\, a_m \, k_n \right] \equiv \\ \Re\left[ \overline{a_k} \,  
\tilde{\tens{G}}(-\ii \omega) \, a_m \, k_n \right] \, k^\mathrm{R}_{\;l} 
+\Im \left[ \overline{a_k} \,  
\tilde{\tens{G}}(-\ii \omega) \, a_m \, k_n \right] \, k^\mathrm{I}_{\;l}  \geq 0
\end{multline}
On the other hand
$$\tilde{G}_{klmn}(-\ii \omega)\, k_l\, k_n \, a_m = -\ii \omega\, \rho \,a_k \ $$
Hence
\begin{multline} \label{eq:bb}
\Re \left[ \overline{a_k} \,  \overline{k_l} \, \tilde{G}_{klmn}(-\ii \omega) 
\, a_m \, k_n \right] \equiv \\ \Re \left[ \overline{a_k} \, \tilde{G}_{klmn}(-\ii \omega) 
\, a_m \, k_n \right]\, k^\mathrm{R}_{\;l} -\Im \left[ \overline{a_k} \, 
\tilde{G}_{klmn}(-\ii \omega) 
\, a_m \, k_n \right]\, k^\mathrm{I}_{\;l} = 0
\end{multline}
Subtracting equation~\eqref{eq:bb} from equation~\eqref{eq:aa} yields the inequality
$$2 \Im \left[ \overline{a_k} \, \tilde{G}_{klmn}(-\ii \omega) 
\, a_m \, k_n \right]\, k^\mathrm{I}_{\;l} \geq 0$$
which on account of \eqref{eq:flux3},  implies that 
\begin{equation}
\langle \Psi \rangle \cdot \kk^\mathrm{I} \geq 0
\end{equation}
q.e.d.
\end{proof}

A relation with the LICM relaxation moduli can be established by means of the following theorem 
\begin{theorem} \label{thm:vx}
A locally integrable $d\times d$ matrix-valued function $\A(t)$ on $\mathbb{R}_+$ satisfying the 
condition that for each $\vv \in \mathbb{R}^d$ the function
$\vv^\dag \, \A(t)\, \vv $ is non-negative, non-increasing and convex, is CPD. 
\end{theorem}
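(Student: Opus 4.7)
The plan is to verify the condition via the matrix-valued Bochner-type criterion behind Theorem~\ref{thm:BochnerGeneralized1}: $\A$ will be shown to be CPD by establishing that the Hermitian matrix $\Re \hat{\mathbf{F}}(\xi)$ is positive semi-definite for every real $\xi$, where $\mathbf{F}(t) := \A(t)$ for $t > 0$ and $\mathbf{F}(t) := 0$ otherwise. This reduces the problem to a pointwise positivity statement in Fourier space, which I will reach by reducing to the scalar CPD theory component by component.

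First I would fix $\vv \in \mathbb{R}^d$ and study the scalar function $f_\vv(t) := \vv^\top \A(t) \vv$, which is by hypothesis non-negative, non-increasing and convex on $\mathbb{R}_+$. Convexity plus monotonicity forces $-f_\vv'$ to be non-negative and non-increasing, hence the tail of a positive Radon measure $\mu_\vv$ on $\mathbb{R}_+$; non-negativity of $f_\vv$ rules out an affine drift and yields the representation
\[
f_\vv(t) = f_\vv(\infty) + \int_{]0,\infty[} (r-t)_+ \, \mu_\vv(\dd r), \qquad t > 0,
\]
with $f_\vv(\infty) \in [0,\infty)$. Each building block is scalar CPD: the constant is trivially so, while for the kernel $(r-t)_+$ its even extension to $\mathbb{R}$ equals $(r-|t|)_+ = (\chi_{[0,r]} \ast \check{\chi}_{[0,r]})(t)$, whose Fourier transform $|\hat{\chi}_{[0,r]}(\xi)|^2$ is non-negative, so Parseval produces the scalar CPD inequality. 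Since CPD is preserved by positive superpositions, $f_\vv$ is scalar CPD for every $\vv \in \mathbb{R}^d$.

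Applying the scalar version of Theorem~\ref{thm:BochnerGeneralized1} gives $\Re \widehat{F_\vv}(\xi) \geq 0$ for all $\xi$, where $F_\vv$ is the extension of $f_\vv$ by zero to $\mathbb{R}$. Because $F_\vv = \vv^\top \mathbf{F}\, \vv$, Fourier commutes with the quadratic form and we obtain $\vv^\top \Re \hat{\mathbf{F}}(\xi)\, \vv \geq 0$ for every real $\vv$. To upgrade to complex test vectors, observe that only the symmetric part of $\A(t)$ enters $\vv^\top \A(t) \vv$, so without loss of generality $\A(t)$ is real symmetric and $\Re \hat{\mathbf{F}}(\xi)$ is a real symmetric matrix. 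For $\vv = \vv_1 + \ii\, \vv_2 \in \mathbb{C}^d$ the symmetry kills the cross terms, so
\[
\vv^\dag \Re \hat{\mathbf{F}}(\xi)\, \vv = \vv_1^\top \Re \hat{\mathbf{F}}(\xi)\, \vv_1 + \vv_2^\top \Re \hat{\mathbf{F}}(\xi)\, \vv_2 \geq 0,
\]
and the matrix analogue of Theorem~\ref{thm:BochnerGeneralized1} delivers the conclusion.

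The hard part will be making the first step truly rigorous: one has to justify the integral representation of $f_\vv$ as a genuine Radon integral (controlling possible divergence of $f_\vv$ near $t=0$ and verifying a moment condition on $\mu_\vv$ near infinity so that $\int (r-t)_+ \, \mu_\vv(\dd r)$ is locally integrable), and check that ``positive superposition preserves CPD'' carries through for the possibly unbounded measure $\mu_\vv$, for instance via a monotone approximation by truncated measures with finite mass. Once this is in place the Fourier--Parseval identification and the polarization argument are routine.
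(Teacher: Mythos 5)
The paper does not actually prove Theorem~\ref{thm:vx}; it only quotes \cite{GripenbergLondenStaffans}, Theorem~16.3.1. Your argument reconstructs what is essentially the classical proof behind that citation, and its core is sound. The representation $f_\vv(t)=f_\vv(\infty)+\int_{]0,\infty[}(r-t)_+\,\mu_\vv(\dd r)$ of a non-negative, non-increasing, convex function, the identification of the even extension of $(r-t)_+$ with $\chi_{[0,r]}\ast\check{\chi}_{[0,r]}$ (Fourier transform $2(1-\cos r\xi)/\xi^2\geq 0$), and the Parseval step are exactly the standard mechanism. The Fubini issues you flag are real but routine: $\phi\ast\check{\phi}$ is bounded with compact support, local integrability of $f_\vv$ near $0$ forces $\int_{]0,1]}r^2\,\mu_\vv(\dd r)<\infty$, and finiteness of $f_\vv(1)$ forces $\int_{[1,\infty[}r\,\mu_\vv(\dd r)<\infty$, which is all the domination one needs. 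One stylistic point: you pass through ``$f_\vv$ is scalar CPD'' and then back to $\Re\widehat{F_\vv}(\xi)\geq 0$, which uses Theorem~\ref{thm:BochnerGeneralized1} in the converse direction to the one the paper states; your Parseval computation already gives $\Re\widehat{F_\vv}(\xi)\geq 0$ directly, so that detour can simply be deleted and the cited theorem used only once, in its stated (sufficiency) direction.

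The one step that does not hold as written is the reduction ``only the symmetric part of $\A(t)$ enters $\vv^\top\A(t)\vv$, so without loss of generality $\A(t)$ is real symmetric.'' The hypothesis indeed sees only the symmetric part, but the conclusion does not: CPD of the symmetric part does not imply CPD of $\A$. Concretely, $\A(t)=\left(\begin{smallmatrix}0&g(t)\\-g(t)&0\end{smallmatrix}\right)$ satisfies the hypothesis for every locally integrable $g$, since the quadratic form vanishes identically on $\mathbb{R}^2$, yet the Hermitian part of $\hat{\mathbf{F}}(\xi)$ equals $\left(\begin{smallmatrix}0&\ii\,\Im\hat{g}(\xi)\\-\ii\,\Im\hat{g}(\xi)&0\end{smallmatrix}\right)$, whose eigenvalues are $\pm\Im\hat{g}(\xi)$; this is not positive semi-definite unless $\Im\hat{g}\equiv 0$. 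So the theorem is only true for symmetric-valued $\A$ --- which is how Definition~\ref{def:tCPD} is set up (values in the space $C$ of symmetric operators) and how the result is applied to relaxation moduli with $G_{ijkl}=G_{klij}$ --- and your ``without loss of generality'' should be replaced by an explicit symmetry hypothesis. Once $\A(t)=\A(t)^\top$ is assumed, your polarization from real to complex vectors and the appeal to the matrix form of Theorem~\ref{thm:BochnerGeneralized1} are correct.
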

\noindent cf \cite{GripenbergLondenStaffans}, Theorem~16.3.1. 

Theorem~\ref{thm:vx} can be readily generalized to operator-valued functions 
such as the tensor-valued functions. In particular every tensor-valued LICM 
function satisfies the hypotheses of Theorem~\ref{thm:vx} is CPD. 
\begin{corollary} \label{cor:ux}
If the relaxation modulus is LICM then the angle between the time-averaged energy flux density and 
the attenuation vector is acute.
\end{corollary}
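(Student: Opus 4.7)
The plan is to deduce the corollary as an immediate consequence of the preceding theorem (the CPD case) once we check that an LICM relaxation modulus falls inside the CPD class. The chain is therefore: LICM $\Rightarrow$ hypotheses of Theorem~\ref{thm:vx} $\Rightarrow$ CPD $\Rightarrow$ acute angle between $\langle \mathbf{\Psi}\rangle$ and $\kk^{\mathrm{I}}$. No new analytical machinery is needed; the whole argument is an invocation of tools already present in the section.

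The first substantive step is to recall what LICM means for a tensor-valued function. Interpreting $\tens{G}(t)$ as a symmetric operator on the space $S$ of symmetric matrices (as in Definition~\ref{def:tCPD}), the LICM hypothesis means that for every fixed $\ee\in S$ the scalar function $g_{\ee}(t) := \langle \ee,\tens{G}(t)\,\ee\rangle$ is locally integrable on $\mathbb{R}_+$ and satisfies
\[
(-1)^n\, g_{\ee}^{(n)}(t)\geq 0 \qquad\text{for all } t>0,\ n\in \mathbb{N}\cup\{0\},
\]
which is the definition \eqref{eq:CM} applied to each quadratic form. The $n=0,1,2$ cases give directly that $g_{\ee}$ is non-negative, non-increasing and convex on $\mathbb{R}_+$. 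These are exactly the three properties that the hypothesis of Theorem~\ref{thm:vx} asks of $\vv^\dag\,\A(t)\,\vv$, translated to the tensor-valued setting via the quadratic form $\langle \ee,\tens{G}(t)\,\ee\rangle$ on $S$.

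The second step is to apply Theorem~\ref{thm:vx}, which the paper has stated to extend to operator-valued (in particular tensor-valued) functions: since $g_{\ee}$ is non-negative, non-increasing and convex for every $\ee\in S$, the tensor-valued function $\tens{G}$ is CPD in the sense of Definition~\ref{def:tCPD}. Then the main theorem of Section~\ref{sec:CPD}, which was just proved for tensor-valued CPD relaxation moduli, gives $\langle \mathbf{\Psi}\rangle\cdot\kk^{\mathrm{I}}\geq 0$, i.e.\ the angle between the time-averaged energy flux density and the attenuation vector is acute, which is the content of the corollary.

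The only delicate point, and the one I expect to need a half-sentence of care, is the passage from the scalar CPD criterion of Theorem~\ref{thm:vx} to its tensor-valued analogue: one must verify that the scalar inequality \eqref{eq:PT+} for each test quadratic form $\langle \ee(\cdot),\tens{G}(\cdot)\,\ee(\cdot)\rangle$ is equivalent to the operator-valued inequality in Definition~\ref{def:tCPD}. This is a routine polarization in $S$ analogous to the one used in the proof of Theorem~\ref{thm:CBF}, and the paper already flags this generalization as immediate.
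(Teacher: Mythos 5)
Your proposal is correct and follows essentially the same route as the paper: the $n=0,1,2$ cases of the CM inequalities give non-negativity, monotonicity and convexity of each quadratic form $\langle \ee,\tens{G}(t)\,\ee\rangle$, so the (tensor-valued generalization of) Theorem~\ref{thm:vx} yields the CPD property and the main theorem of Section~\ref{sec:CPD} then gives $\langle \mathbf{\Psi}\rangle\cdot\kk^{\mathrm{I}}\geq 0$. The paper treats this exactly as you do, even making the same observation that only the first three inequalities in \eqref{eq:CM} are actually used.
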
 

The LICM property is defined in terms of an infinite sequence of inequalities \eqref{eq:CM} for derivatives 
of all the integer orders. However only the first three of them are relevant for the thesis of 
Corollary~\ref{cor:ux}. This is interesting because the CM property cannot be verified from raw
data while the first three inequalities \eqref{eq:CM} can.

Viscoelastic media with CPD relaxation moduli were studied in \cite{HanHamiltonianVE}.

\end{document}